\documentclass[journal]{IEEEtran}
\usepackage[T1]{fontenc}% optional T1 font encoding

\ifCLASSINFOpdf
\else
\fi
\usepackage{amsmath}
\usepackage{amsthm}
\usepackage{amsfonts}
\usepackage{amssymb}
\usepackage{float}
\usepackage{pgfplots}
\usepackage{tikz}
\usepackage[colorinlistoftodos]{todonotes}
\usetikzlibrary[pgfplots.groupplots]

\newlength{\hsep}

\usepackage{fancyhdr}
\pagestyle{fancy}\fancyhf{}
\lhead{PREPRINT, June 22, 2022}
\rhead{\thepage}

\usepackage{lipsum}% http://ctan.org/pkg/lipsum
\usepackage{fancyhdr}% http://ctan.org/pkg/fancyhdr
\fancypagestyle{title}{%
  \setlength{\headheight}{22pt}%
  \fancyhf{}% No header/footer
  % No header rule
  % No footer rule
  \lhead{PREPRINT, June 22, 2022}
\rhead{\thepage}
}%

\pgfplotsset{width=8.5cm,compat=1.17} 

\newtheorem{theorem}{Theorem}[]

\begin{document}
\title{Time-Limited Waveforms with Minimum Time Broadening for the Nonlinear Schr$\ddot{\text{o}}$dinger Channel}% via Prolate Spheroidal Wave Functions}

\author{Youssef~Jaffal,  Alex~Alvarado, \textit{Senior Member, IEEE}

\thanks{This work has  received  funding  from  the  European  Research  Council (ERC) under the European Union’s Horizon 2020 research and innovation programme under Grant 757791. \textit{(Corresponding author: Youssef Jaffal.)}

The authors are with the Department of Electrical
Engineering, Eindhoven University of Technology, 5600 MB Eindhoven,
The Netherlands (e-mail: y.jaffal@tue.nl; a.alvarado@tue.nl).

}
}% <-this % stops a space

\maketitle
\thispagestyle{title}
% As a general rule, do not put math, special symbols or citations
% in the abstract or keywords.
\begin{abstract}

Simple fiber optic communication systems can be implemented using energy modulation of isolated time-limited pulses. Fundamental solitons are one possible solution for such pulses which offer a fundamental advantage: their shape is not affected by fiber disperison and nonlinearity. Furthermore, a simple energy detector can be used at the receiver to detect the transmitted information. However, systems based on energy modulation of solitons are not competitive in terms of data rates. This is partly due to %solitons not being time-limited pulses and
the fact that the effective time duration of a soliton depends on its chosen amplitude. In this paper, we propose to replace fundamental solitons by new time-limited waveforms that can be detected using an energy detector, and that are immune to fiber distortions. Our proposed solution relies on the prolate spheroidal wave functions and a numerical optimization routine. Time-limited waveforms that undergo minimum time broadening along an optical fiber are obtained and shown to outperform fundamental solitons. In the case of binary transmission and a single span of fiber, we report rate increases of $33.8$\% and $12$\% over lossy and lossless fibers, respectively. Furthermore, we show that the transmission rate of the proposed system increases as the number of used energy levels increases, which is not the case for fundamental solitons due to their effective time-amplitude constraint. For example,  rate increases of $164$\%  and $70$\% over lossy and lossless fibers respectively are reported when using four energy levels.

\end{abstract}

% Note that keywords are not normally used for peerreview papers.
\begin{IEEEkeywords}
Nonlinear Schr$\ddot{\text{o}}$dinger equation, energy modulation, optical fibers,  time broadening, solitons.
\end{IEEEkeywords}

% For peer review papers, you can put extra information on the cover
% page as needed:
% \ifCLASSOPTIONpeerreview
% \begin{center} \bfseries EDICS Category: 3-BBND \end{center}
% \fi
%
% For peerreview papers, this IEEEtran command inserts a page break and
% creates the second title. It will be ignored for other modes.
\IEEEpeerreviewmaketitle

\section{Introduction}
\label{Sect1}

%pulse propagation NLSE: linear modulation and NFT [Yousefi]. simplest NFT is energy modulation of isolated solitons. but not competitive: 
%Soltions are applicable in lossless fibers IDA, however EDFA no solution. both add noise but here only noiseless.
The propagation of signals through a single-mode optical fiber is governed by the nonlinear Schr$\ddot{\text{o}}$dinger (NLS) equation \cite[Chapter 2]{AGRAWAL201927}. Communication systems for the NLS channel can be divided into two main categories. The first one uses linear modulation techniques combined with either linear or nonlinear (e.g., digital backpropagation) receivers. Such communication systems (with linear receivers) are currently implemented in most commercial systems. The second category includes systems with nonlinear transmitters and receivers, e.g., those based on the nonlinear Fourier transform (NFT). Such systems are tailored to the NLS equation \cite{yousefi2014Part3} and are currently still at a research stage. The simplest NFT-based transceiver uses energy modulation (EM) of isolated fundamental solitons, where the inverse NFT is not required at the receiver and a simple energy detector can be used. Such systems date in fact back to 1973, where the idea of using solitons for transmission was first proposed by Hasegawa \cite{hasegawa1973transmission}.

The transmission rates of EM of isolated fundamental solitons are lower than the ones achieved by other sophisticated techniques (such as current coherent systems or modulation techniques that need the NFT or its inverse). Systems based on EM of isolated solitons suffer from several weaknesses.
First, EM systems have limited transmission rates as they limit the number of useful signals. For example, the energy detector cannot differentiate between two different signals that have the same energy but different phases.  %Second, using isolated pulses results in a lower spectral efficiency since their time-bandwidth product is relatively high (compared to using root raised cosine pulses at the Nyquist signalling rate for example). 
%Third, fundamental solitons  suffer from several weaknesses that deprive them from being the optimal pulses in the EM of isolated pulses for NLS channels, although they possess the nice property of preserving their shape along the NLS channel. 
Second, fundamental solitons are not time-limited, and hence consecutive fundamental solitons are not perfectly isolated. Generating a time-limited signal from a fundamental soliton can be done by simple truncation or by more sophisticated methods such as the soliton shortening methods recently introduced in \cite{Sander2021}. Increasing the time window of the truncated or shortened soliton reduces the distortion and provides better isolation between the consecutive pulses along the NLS channel. Such time increase  decreases the bit error rate, but it also decreases the transmission rates. Third, fundamental solitons have a high time-bandwidth product \cite{span2019time}, which results in a low spectral efficiency. % it is difficult to add numbers here because the time bandwidth product depends on epsilon, and this needs the definition of time-limited or band-limited for a given epsilon 
Fourth,  low energy solitons need relatively high time duration since the energy and the effective duration of solitons are inversely proportional. Consequently, conventional pulse amplitude modulation (PAM) systems with more than two amplitude levels have lower performance compared to an on-off keying (OOK) system as reported in \cite[Fig.~2(d)]{chen2020capacity}. % One way to overcome the main weakness of fundamental solitons is to replace them by better pulses; Compared to a soliton, a shorter pulse with the same energy is more desirable if the effective duration of the broadened received pulse remains lower than the effective duration of the received soliton. %For lossy fibers, solitons do not preserve their shape along the fiber and they also suffer from the three weaknesses that we listed previously. YJ: The proposed pulses also do not preserve their shape, so do not consider this as weakness

% we consider practical systems that use the same modulation interval for all pulses
 Fundamental solitons are not the only candidate pulses to be used in EM of isolated pulses; as the transmitted information is embedded in the energy of the transmitted pulses, preserving the shape of the transmitted pulses along the NLS channel is in fact not required. The desirable property of the used pulses is to have low time  broadening: large time broadening would require a large guard time to guarantee that consecutive pulses remain isolated along the channel, which would decrease the transmission rates of the system. Compared to a fundamental soliton (that preserves its effective duration), a shorter pulse with the same energy is more desirable if the effective duration of the broadened received pulse remains lower than the effective duration of the received soliton.

\begin{figure*}[t!]
\centering
\input{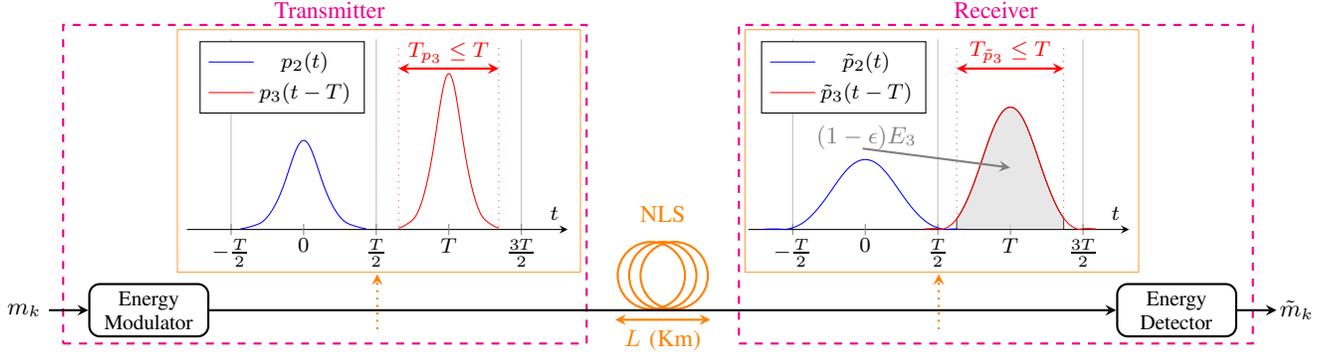}
\caption{Energy modulation of isolated pulses, where consecutive pulses carry the information bits in their energies. $T$ should be greater or equal to the duration of each transmitted pulse ($T_{p_m}$), and greater or equal to the duration that contains $100(1-\epsilon)\%$ of the energy of each received pulse  ($T_{\tilde{p}_m}$).}
    \label{FigSystemModel}
\end{figure*}
 
 In this paper we propose replacing fundamental solitons by time-limited waveforms that undergo  \textit{minimum time broadening (MTB)} in EM of isolated pulses.
%In this paper we consider the problem of finding the time-limited waveforms that undergo the minimum time broadening along the NLS channel. As we are interested in time-limited waveforms with a constraint on their out-of-band energy, w
We make use of the set of prolate spheroidal wave functions (PSWFs) to search for the MTB waveforms. The PSWFs form a complete orthonormal set for time-limited functions and they were used in \cite{halpern1979optimum}, \cite{jaffal2019achievable}, and \cite{jaffal2022time} to design optimal time-limited pulses. We consider both lossless and lossy fibers %, and we ignore the noise to focus on the propagation of pulses in deterministic optical fibers. Making use of the PSWFs, we search for the time-limited waveforms with minimum time broadening (MTB). We consider an $80$~Km optical fiber 
and we show that the obtained MTB waveforms have lower time duration and lower time-bandwidth product compared to fundamental solitons. Consequently, MTB waveforms outperform the truncated solitons in conventional EM of isolated pulses. More importantly, we show that MTB waveforms have the ability to increase the transmission rates and the spectral efficiency when additional low energy pulses are used, which as previously mentioned, is not the case for solitons. The main contribution of this paper is introducing MTB waveforms that improve the performance of EM of isolated pulses in fiber optic channels.

The paper is organized as follows: in Sec. \ref{Sect:SystMod} we present the considered system model. In Sec. \ref{Sec:Solitons} we present the numerical results  of using fundamental solitons in EM of isolated pulses. In Sec. \ref{Sec:MTBWaveforms} we propose the use of MTB waveforms and show their advantage over fundamental solitons. We  conclude the paper in Sec. \ref{Sect:conc}.

\section{System Model: Energy Modulation of Isolated Pulses}
\label{Sect:SystMod}

We consider the EM system illustrated in Fig. \ref{FigSystemModel}. The transmitted messages $\{m_k\}_{k\in\mathbb{Z}}$ are chosen from the set $\mathcal{M}=\{1,2, \dots, M\}$ where we restrict $M$ to be a power of two. For each message $m\in \mathcal{M}$, the energy modulator generates a corresponding pulse $p_m(t)$ that has an energy 
\begin{equation}
E_{m} = \left(\frac{m-1 }{M-1}\right)^2E_{M},
\label{Eq:EnergiesEM}
\end{equation}
where $E_{M}$ is the highest used energy level. Note that if the chosen pulses $p_m(t)$ are scaled versions of the same pulse, i.e.,  $p_m(t)=\frac{m-1}{M-1}p_{M}(t)$ $\forall m\in\mathcal{M}$, then the considered EM system is equivalent to conventional PAM with equispaced amplitudes, where the transmitted information bits are embedded in the amplitude of the same pulse. 

We consider the use of strictly time-limited pulses. We define the duration $T_{p_m}$ of $p_m(t)$ by the minimum time window that satisfies
\begin{equation}
       p_m(t)=0 \text{ if } t \notin \left[-\frac{T_{p_m}}{2}, \frac{T_{p_m}}{2} \right].
    \label{eq:pulsePtl}
\end{equation}
Note that $T_{p_1}=0$ since $p_1(t)$ is the zero-energy signal (see (\ref{Eq:EnergiesEM})). 
The energy modulator transmits the signal 
\begin{equation}
q(t,0)=\sum_{k\in\mathbb{Z}}p_{m_k}(t-kT),
\label{eq:TxSignal}
\end{equation}
where $T$ is the modulation interval.
To guarantee strict isolation between the consecutive transmitted pulses, we restrict $T$ to be greater or equal to the duration of the pulses, i.e., 
\begin{equation}
   T\geq T_{p_m}   \text{ for } m\in\{2,3,\dots, M\}.
   \label{eq:TxIsolation}
\end{equation}

Because of the time limitation in \eqref{eq:pulsePtl}, $\left\{p_{m}(t)\right\}_{m=2}^{M}$ cannot be strictly band-limited.  We define their effective bandwidth by the values of $W_{p_m}$ that satisfy
\begin{equation}
       \int_{-W_{p_m}}^{W_{p_m}} \left|P_m(f)\right|^2df = (1-\epsilon)E_m  \text{ for } m\in\{2,3,\dots, M\},
       \label{eq:Banddef}
\end{equation}
where $P_m(f)=\int_{-\infty}^{\infty}p_m(t)e^{-j2\pi ft}dt$ is the Fourier transform of $p_m(t)$, and $\epsilon E_m$ is the corresponding out-of-band energy where $0<\epsilon<1$. Note that $W_{p_1}=0$ since $p_1(t)$ is the zero signal. We consider a maximum allowable bandwidth $W_{max}$ where the bandwidths of the transmitted pulses should satisfy
\begin{equation}
       W_{p_m}\leq W_{max} \text{ for } m\in\{2,3,\dots, M\}.
       \label{eq:BandConstr}
\end{equation}

The considered channel is a single-mode optical fiber of length $L$, where the evolution of the propagated signal $q(t,z)$ is governed by the NLS equation \cite[eq. (2.3.46)]{AGRAWAL201927}
\begin{equation}
    \frac{\partial q(t,z)}{\partial z} = -\frac{\alpha}{2}q(t,z) -\frac{i\beta_2}{2}\frac{\partial^2 q(t,z)}{\partial^2 t} + i\gamma|q(t,z)|^2q(t,z),
    \label{NLSEq}
\end{equation}
where $t$ represents retarded time, $z$ is the propagation distance, $\alpha$ represents the fiber losses, $\beta_2$ is the group-velocity dispersion parameter, and $\gamma$ is the nonlinear parameter. %The evolution of $q(t,z)$ can be computed using numerical methods such as the split-step Fourier method (SSFM) \cite[Sec. 2.4.1]{AGRAWAL201927}.% use this in numerical results section 
The NLS equation is used to describe two main models for fiber optical channels. The first model is when using ideal distributed amplification where we assume that the fiber losses are ideally compensated, hence $\alpha=0$. We refer to this model as ``lossless fiber''.
The second one is the ``lossy fiber'' model where an erbium-doped fiber amplifier (EDFA) is used at the end of the optical fiber, so $\alpha$ is non zero along the fiber. %The EDFA amplifies the received signal at the end of the fiber span. 
Both amplification schemes add noise to the system, but in this work we ignore the noise and we focus on the deterministic pulse propagation in  optical fibers.

The receiver can obtain the squared magnitude of the propagated signal $|q(t,L)|^2$ using a photodiode detector for example. 
Let 
\begin{equation}
  \left\{\Tilde{p}_{m}(t)\right\}_{m\in \mathcal{M}}  
  \label{eq:RxPulses}
\end{equation}
to be the magnitude of the propagated $\left\{p_{m}(t)\right\}_{m\in \mathcal{M}}$, i.e.,   $\forall m\in\mathcal{M}$
\begin{equation*}
\Tilde{p}_{m}(t) = |q(t,L)|    \text{ if }q(t,0) = p_{m}(t).
\label{}
\end{equation*}
We assume that the energy of a received pulse is equal to the energy of the corresponding transmitted one, which can be achieved by using amplifiers. Hence, $E_m$ is the energy of $\Tilde{p}_m(t)$ $\forall m\in \mathcal{M}$. The time support of the propagated pulses may change along the NLS channel, and we refer to this effect as ``time broadening''. Note that $\Tilde{p}_1(t)$ is the zero signal and has zero duration and zero bandwidth.
We define the effective duration of $\Tilde{p}_m(t)$ with $m\in\{2,3, \dots, M\}$ by the time window  $T_{\Tilde{p}_m}$ that satisfies %(\ref{eq:pulsetildePtl}) with equality, i.e.
\begin{equation}
    \int_{-T_{\Tilde{p}_m}/2}^{T_{\Tilde{p}_m}/2} \left|\Tilde{p}_m(t)\right|^2dt = (1-\epsilon)E_m.
    \label{eq:Teff}
\end{equation}

To control the interaction between consecutive received pulses, we impose the following constraint on the modulation interval:
\begin{equation}
    T\geq T_{\tilde{p}_m}  \text{ for } m\in\{2,3,\dots, M\}.
    \label{eq:RxIsolation}
\end{equation}
Decreasing $\epsilon$ improves the isolation between propagated consecutive pulses, hence, consecutive pulses are said to be \textit{isolated at level $1-\epsilon$} at the receiver. Note that the isolation should be also maintained along the channel and not only at the receiver. If the effective duration of the propagated pulses along the channel does not exceed the effective duration at the receiver, then the constraint in (\ref{eq:RxIsolation}) is sufficient. Otherwise, as in the case of multi-solitons \cite{span2019time}, the constraint in (\ref{eq:RxIsolation}) should be extended to take into account the effective duration along the channel. 

By choosing a sufficiently small value for $\epsilon$ in (\ref{eq:Teff}), the interaction between the possibly broadened consecutive pulses can be kept negligible when transmitting the signal in (\ref{eq:TxSignal}) through \eqref{NLSEq}. Then, the received signal $|q(t,L)|$ is well approximated by
\begin{equation}\label{aaaaa}
|q(t,L)|\approx\sum_{k\in\mathbb{Z}} \Tilde{p}_{m_k}(t-kT).
\end{equation}
 In this case, the energy detector can generate the estimate $\{\Tilde{m}_k\}$ of $\{m_k\}$ by computing the energy over the time window $[kT-T/2, \, kT+T/2]$. 

 The transmission rate of this system is
\begin{equation}
    R=\frac{\log_2M}{T}\text{~bit/s}.
    \label{eq:rates}
\end{equation}
We consider the use of the modulation interval
\begin{equation}
    T = \max\left\{T_{p_2}, T_{\Tilde{p}_2},T_{p_3}, T_{\Tilde{p}_3},\dots, T_{p_{M}}, T_{\Tilde{p}_{M}}\right\}.
    \label{eq:Tmax}
\end{equation}
To satisfy the constraints in (\ref{eq:TxIsolation}) and (\ref{eq:RxIsolation}), the modulation interval should be greater or equal to the right hand side of \eqref{eq:Tmax}. Hence, the considered modulation interval in \eqref{eq:Tmax} is the one that maximizes \eqref{eq:rates} while satisfying  (\ref{eq:TxIsolation}) and (\ref{eq:RxIsolation}).
We use the bandwidth definition in (\ref{eq:Banddef}) for the pulses $p_m(t)$ and $\Tilde{p}_m(t)$, and we define the spectral efficiency by 
\begin{equation}
    \text{SE} \triangleq \frac{R}{W}=\frac{\log_2M}{WT} \text{~bit/s/Hz},
    \label{eq:SE}
\end{equation}
where 
\begin{equation}
    W=\max\left\{W_{p_2}, W_{\Tilde{p}_2},W_{p_3}, W_{\Tilde{p}_3},\dots, W_{p_{M}}, W_{\Tilde{p}_{M}}\right\}.
    \label{eq:WeqMax}
\end{equation}
If the effective bandwidths along the channel do not exceed the ones at the transmitter and the receiver, then \eqref{eq:SE} and \eqref{eq:WeqMax} are valid definitions. Otherwise, the effective bandwidth along the channel should be also considered as done in \cite{span2019time} for multi-solitons.
We define the time-bandwidth product by
\begin{equation}
    c\triangleq WT,
    \label{eq:TimBandProd}
\end{equation}
which is inversely proportional to the spectral efficiency: the lower the $c$, the higher the spectral efficiency.

In the following section we consider the truncated solitons as a baseline for our system model, and in Sec. \ref{Sec:MTBWaveforms} we propose the MTB waveforms and present their numerical results. In what follows we use the following simulation parameters: $L=80$~Km, $\epsilon=10^{-4}$, $W_{max}=50$~GHz, $\beta_2=-21.7$~$\text{ps}^2$/Km, $\gamma=1.2$~$\text{W}^{-1}\text{Km}^{-1}$, $\alpha=0$~dB/Km for the lossless fiber, and $\alpha=0.2$~dB/Km for the lossy fiber.

\section{Energy Modulation of truncated solitons}
\label{Sec:Solitons}

In this section, we first present the definition and some properties of fundamental solitons, then we present their use in our system model.
In this paper, we use the following equation for fundamental solitons \cite[Sec. 5.2.2]{AGRAWAL201927}:
\begin{equation}
    s(t)=A \, \text{sech}\left(A\sqrt{\frac{\gamma}{|\beta_2|}} t \right),
    \label{eq:soliton}
\end{equation}
which is a solution for (\ref{NLSEq}) when $\alpha=0$, i.e., if $q(t,0)=s(t)$ is used in (\ref{NLSEq}), then $\tilde{s}(t)=|q(t,z)|=s(t)$ $\forall z\in \mathbb{R}$.
The energy of $s(t)$ is 
\begin{equation}
    E_{s}= 2A\sqrt{\frac{|\beta_2|}{\gamma}},
    \label{ESoliton}
\end{equation}
and its effective duration is given by
\begin{equation}
T_{s} = \frac{\sqrt{|\beta_2|}}{A\sqrt{\gamma}}\ln{\frac{2-\epsilon}{\epsilon}},
\label{TeffSoliton}
\end{equation}
which is obtained by solving (\ref{eq:Teff}) for $\tilde{p}_m(t)=s(t)$ (similarly to (9) in \cite{chen2020capacity} which was derived in the case of normalized NLS equation).
%Based on (\ref{ESoliton}) and (\ref{TeffSoliton}), the energy of the soliton and its effective duration are inversely proportional for fixed $\beta_2$, $\gamma$, and $\epsilon$.

The Fourier transform of $s(t)$ is \cite[Appendix F]{lamb1995introductory}
\begin{equation*}
    S(f)= \pi\sqrt{\frac{|\beta_2|}{\gamma}} \, \text{sech}\left(\frac{\pi^2}{A}\sqrt{\frac{|\beta_2|}{\gamma}} f \right).
\end{equation*}
Its effective bandwidth is obtained by solving (\ref{eq:Banddef}) for $P_m(f)=S(f)$ and it is given by
\begin{equation}
    W_{s}= \frac{A\sqrt{\gamma}}{\pi^2\sqrt{|\beta_2|}}\ln{\frac{2-\epsilon}{\epsilon}}.
    \label{WeffSoliton}
\end{equation}
%which is directly proportional to the energy of the soliton (for fixed $\beta_2$, $\gamma$, and $\epsilon$). 
By (\ref{TeffSoliton}) and (\ref{WeffSoliton}), the time-bandwidth product of fundamental solitons is given by
\begin{equation}
    c_{s}= T_s W_s = \frac{1}{\pi^2}\ln^2{\frac{2-\epsilon}{\epsilon}}.
    \label{cSoliton}
\end{equation}
%For example, $c_{s}\approx 9.94$ for $\epsilon=10^{-4}$.

%\subsubsection{Truncated Solitons}
%\label{Sec:NumresultsSolitons}
By (\ref{ESoliton}) and (\ref{WeffSoliton}),  the energy of the soliton is directly proportional to its effective bandwidth. Hence using a maximum constraint on the used bandwidth $W_{max}=50$~GHz yields a maximum possible energy for fundamental solitons of approximately $1.8$~pJ (for the considered values of $\epsilon$, $\beta_2$, and $\gamma$). %Since we use $\epsilon=10^{-4}$ and $W_{max}=50$~GHz, the maximum possible energy for fundamental solitons is $\approx 1.8$~pJ (using (\ref{ESoliton}) and (\ref{WeffSoliton}) and the considered values of $\gamma$ and $\beta_2$). 
%The effective time-bandwidth product of solitons is $c_s\approx 9.94$ (by \eqref{cSoliton}). 
We consider truncated solitons as a baseline of our system, where a soliton is truncated to its effective duration $T_{s}$ (see (\ref{TeffSoliton})). Given an energy level $E$, we generate the soliton over its effective duration using (\ref{eq:soliton}), (\ref{ESoliton}), and (\ref{TeffSoliton}). We obtain the received truncated soliton using the split-step Fourier method (SSFM) \cite[Sec. 2.4.1]{AGRAWAL201927}, and we estimate numerically its effective duration $T_{\tilde{s}}$. In Fig. \ref{FigTsolitons} we show the transmitted duration (solid) and the estimated received duration (dashed) for both lossless and lossy fibers as a function of the energy $E$. Note that $T_s$ and $E$ are inversely proportional (by (\ref{ESoliton}) and (\ref{TeffSoliton})), where the minimum $T_s$ is obtained at the maximum energy $E=1.8$~pJ, and where $T_{s}\to \infty$ as $E\to 0$. For energies that are lower than $0.9$~pJ, $T_{\tilde{s}}$ is almost the same for lossy and lossless fibers (no broadening). $T_{\tilde{s}}$ is slightly lower than $T_{s}$, then $\max\{T_s, T_{\tilde{s}}\}=T_{s}$. For energies higher than $0.9$~pJ, $T_{\tilde{s}}$ is larger than $T_{s}$, and hence $\max\{T_s, T_{\tilde{s}}\}=T_{\tilde{s}}$ for both lossy and lossless fibers.  We verified that the effective duration of the truncated solitons along the fibers does not exceed their effective duration at the receiver, hence the constraint in \eqref{eq:RxIsolation} is sufficient to guarantee isolation along both the channel and the receiver. 
 
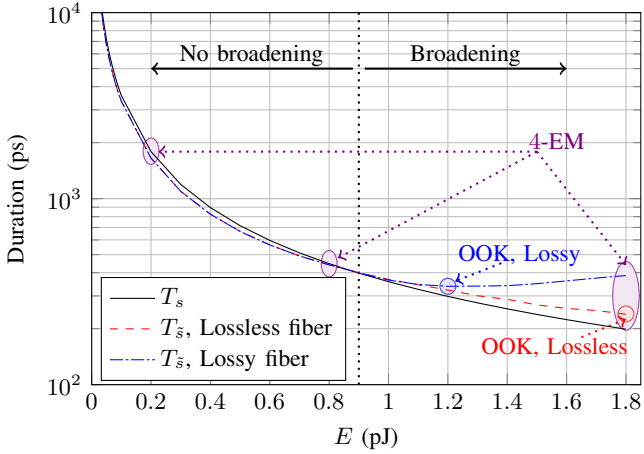
\begin{figure}[t!]
\centering
\begin{tikzpicture}[font=\small,
    Gclabel/.style={text=black},]
\begin{semilogyaxis}[
    xlabel={$E$ (pJ)},
    ylabel={Duration (ps)},
    xmin=0.0, xmax=1.85,
    ymin=100, ymax=10000,
    xtick={0,0.2,0.4,0.6,0.8, 1, 1.2,1.4, 1.6,1.8},
    ytick={0.1, 1, 10, 100, 1000, 10000, 100000, 1000000, 10000000},
    legend style={at={(0.24,0.01)},anchor=south},
    ymajorgrids=true,
    xmajorgrids =true,
    yminorgrids=true,
    xminorgrids =true,
    legend cell align={left},    
    scale only axis,
    width=7.3cm,
    height=4.95cm,
]
%\filldraw[color=blue!75, fill=blue!10, rotate around={-19:(0.3,0.283)}](0.3,0.283) ellipse  (0.019 and 0.006);
\filldraw[color=violet!75, fill=violet!10](1.8,300) ellipse  (5pt and 13pt);
\draw[violet, ->, dotted, thick] (1.5,1789) -- (1.8,440);

\filldraw[color=violet!75, fill=violet!10](0.8,447) ellipse  (3pt and 5pt);
\draw[violet, ->, dotted, thick] (1.5,1789) -- (0.83,480)node[pos=-0.1] {$4$-EM};

\filldraw[color=violet!75, fill=violet!10](0.2,1800) ellipse  (3pt and 5pt);
\draw[violet, ->, dotted, thick] (1.5,1789) -- (0.225,1789);

\filldraw[color=red!75, fill=red!10](1.8,240) ellipse  (3pt and 3pt);
\draw[red, ->, dotted, thick] (1.65,180) -- (1.8,225) node[pos=-0.6] {OOK, Lossless};

\filldraw[color=blue!75, fill=blue!10](1.2,338) ellipse  (3pt and 3pt);
\draw[blue, ->, dotted, thick] (1.38,450) -- (1.225,360) node[pos=-0.4] {OOK, Lossy};

\draw[black, -, thick, dotted] (0.9,100) -- (0.9,10000);
\draw[black, ->,  thick] (0.92,5000) -- (1.6,5000)node[pos=0.5,yshift=5] {Broadening};
\draw[black, ->,  thick] (0.88,5000) -- (0.2,5000)node[pos=0.5,yshift=5] {No broadening};

\addplot[
    color=black,
    ]
    coordinates {
%(0.00001,35788.7445)(0.00005,7157.7489)(0.00010,3578.8745)(0.00020,1789.4372)(0.00050,715.7749)(0.00080,447.3593)(0.00100,357.8874)(0.00200,178.9437)(0.00500,71.5775)(0.00800,44.7359)(0.01000,35.7887)(0.02000,17.8944)(0.03000,11.9296)(0.04000,8.9472)(0.05000,7.1577)(0.06000,5.9648)(0.07000,5.1127)(0.08000,4.4736)(0.09000,3.9765)(0.10000,3.5789)(0.20000,1.7894)(0.30000,1.1930)(0.40000,0.8947)(0.50000,0.7158)(0.60000,0.5965)(0.70000,0.5113)(0.80000,0.4474)(0.90000,0.4000)(1.00000,0.3684)(1.10000,0.3444)(1.20000,0.3227)(1.30000,0.3006)(1.40000,0.2876)(1.50000,0.2684)(1.60000,0.2569)(1.70000,0.2504)(1.80000,0.2392)
(9.999010e-06, 3.578874e+07)(4.999505e-05, 7.157749e+06)(9.999010e-05, 3.578874e+06)(1.999802e-04, 1.789437e+06)(4.999505e-04, 7.157749e+05)(7.999208e-04, 4.473593e+05)(9.999010e-04, 3.578874e+05)(1.999802e-03, 1.789437e+05)(4.999505e-03, 7.157749e+04)(7.999208e-03, 4.473593e+04)(9.999010e-03, 3.578874e+04)(1.999802e-02, 1.789437e+04)(2.999703e-02, 1.192958e+04)(3.999604e-02, 8.947186e+03)(4.999505e-02, 7.157749e+03)(5.999406e-02, 5.964791e+03)(6.999307e-02, 5.112678e+03)(7.999208e-02, 4.473593e+03)(8.999109e-02, 3.976527e+03)(9.999010e-02, 3.578874e+03)(1.999802e-01, 1.789437e+03)(2.999703e-01, 1.192958e+03)(3.999604e-01, 8.947186e+02)(4.999505e-01, 7.157749e+02)(5.999406e-01, 5.964791e+02)(6.999307e-01, 5.112678e+02)(7.999208e-01, 4.473593e+02)(8.999109e-01, 3.976527e+02)(9.999010e-01, 3.578874e+02)(1.099891e+00, 3.253522e+02)(1.199881e+00, 2.982395e+02)(1.299871e+00, 2.752980e+02)(1.399861e+00, 2.556339e+02)(1.499851e+00, 2.385916e+02)(1.599842e+00, 2.236797e+02)(1.699832e+00, 2.105220e+02)(1.799822e+00, 1.988264e+02)

    };
\addlegendentry{$T_{s}$}

\addplot[
    color=red,
    dashed,
    ]
    coordinates {
%(0.00001,35788.7445)(0.00005,7157.7489)(0.00010,3578.8745)(0.00020,1789.4372)(0.00050,715.7749)(0.00080,447.3593)(0.00100,357.8874)(0.00200,178.9437)(0.00500,71.5775)(0.00800,44.7359)(0.01000,35.7887)(0.02000,17.8944)(0.03000,11.9296)(0.04000,8.9472)(0.05000,7.1577)(0.06000,5.9648)(0.07000,5.1127)(0.08000,4.4736)(0.09000,3.9765)(0.10000,3.5789)(0.20000,1.7894)(0.30000,1.1930)(0.40000,0.8947)(0.50000,0.7158)(0.60000,0.5965)(0.70000,0.5113)(0.80000,0.4474)(0.90000,0.4000)(1.00000,0.3684)(1.10000,0.3444)(1.20000,0.3227)(1.30000,0.3006)(1.40000,0.2876)(1.50000,0.2684)(1.60000,0.2569)(1.70000,0.2504)(1.80000,0.2392)
(9.999010e-06, 3.327235e+07)(4.999505e-05, 6.654470e+06)(9.999010e-05, 3.327235e+06)(1.999802e-04, 1.663617e+06)(4.999505e-04, 6.654470e+05)(7.999208e-04, 4.159044e+05)(9.999010e-04, 3.327235e+05)(1.999802e-03, 1.663617e+05)(4.999505e-03, 6.654470e+04)(7.999208e-03, 4.159044e+04)(9.999010e-03, 3.327235e+04)(1.999802e-02, 1.663617e+04)(2.999703e-02, 1.109078e+04)(3.999604e-02, 8.318087e+03)(4.999505e-02, 6.654470e+03)(5.999406e-02, 5.545391e+03)(6.999307e-02, 4.753193e+03)(7.999208e-02, 4.159044e+03)(8.999109e-02, 3.696928e+03)(9.999010e-02, 3.334225e+03)(1.999802e-01, 1.649637e+03)(2.999703e-01, 1.095098e+03)(3.999604e-01, 8.248187e+02)(4.999505e-01, 6.668450e+02)(5.999406e-01, 5.638591e+02)(6.999307e-01, 4.922950e+02)(7.999208e-01, 4.403693e+02)(8.999109e-01, 3.999827e+02)(9.999010e-01, 3.683724e+02)(1.099891e+00, 3.444158e+02)(1.199881e+00, 3.227045e+02)(1.299871e+00, 3.005695e+02)(1.399861e+00, 2.875881e+02)(1.499851e+00, 2.684156e+02)(1.599842e+00, 2.568821e+02)(1.699832e+00, 2.504061e+02)(1.799822e+00, 2.392130e+02)

    };
\addlegendentry{$T_{\tilde{s}}$, Lossless fiber}

\addplot[
    color=blue,
    dashed,
    dash pattern=on 5pt off 1pt on 1pt off 1pt
    %dash pattern=on 5pt off 1pt on 1pt off 1pt
    %mark=o,
    %mark repeat=5,
    %mark phase=3
    ]
    coordinates {
%(0.00001,35788.7445)(0.00005,7157.7489)(0.00010,3578.8745)(0.00020,1789.4372)(0.00050,715.7749)(0.00080,447.3593)(0.00100,357.8874)(0.00200,178.9437)(0.00500,71.5775)(0.00800,44.7359)(0.01000,35.7887)(0.02000,17.8944)(0.03000,11.9296)(0.04000,8.9472)(0.05000,7.1577)(0.06000,5.9648)(0.07000,5.1127)(0.08000,4.4736)(0.09000,3.9765)(0.10000,3.5789)(0.20000,1.7894)(0.30000,1.1930)(0.40000,0.8947)(0.50000,0.7158)(0.60000,0.5965)(0.70000,0.5113)(0.80000,0.4474)(0.90000,0.4000)(1.00000,0.3649)(1.10000,0.3457)(1.20000,0.3378)(1.30000,0.3387)(1.40000,0.3405)(1.50000,0.3490)(1.60000,0.3613)(1.70000,0.3742)(1.80000,0.3860)
(9.999010e-06, 3.327235e+07)(4.999505e-05, 6.654470e+06)(9.999010e-05, 3.327235e+06)(1.999802e-04, 1.663617e+06)(4.999505e-04, 6.654470e+05)(7.999208e-04, 4.159044e+05)(9.999010e-04, 3.327235e+05)(1.999802e-03, 1.663617e+05)(4.999505e-03, 6.654470e+04)(7.999208e-03, 4.159044e+04)(9.999010e-03, 3.327235e+04)(1.999802e-02, 1.663617e+04)(2.999703e-02, 1.109078e+04)(3.999604e-02, 8.318087e+03)(4.999505e-02, 6.654470e+03)(5.999406e-02, 5.545391e+03)(6.999307e-02, 4.753193e+03)(7.999208e-02, 4.159044e+03)(8.999109e-02, 3.696928e+03)(9.999010e-02, 3.334225e+03)(1.999802e-01, 1.649637e+03)(2.999703e-01, 1.095098e+03)(3.999604e-01, 8.248187e+02)(4.999505e-01, 6.668450e+02)(5.999406e-01, 5.638591e+02)(6.999307e-01, 4.922950e+02)(7.999208e-01, 4.403693e+02)(8.999109e-01, 3.999827e+02)(9.999010e-01, 3.648774e+02)(1.099891e+00, 3.456867e+02)(1.199881e+00, 3.378495e+02)(1.299871e+00, 3.387456e+02)(1.399861e+00, 3.405123e+02)(1.499851e+00, 3.490335e+02)(1.599842e+00, 3.612951e+02)(1.699832e+00, 3.741700e+02)(1.799822e+00, 3.860027e+02)

    };
\addlegendentry{$T_{\tilde{s}}$, Lossy fiber}

\end{semilogyaxis}

\end{tikzpicture}
    \caption{Duration of transmitted ($T_s$) and received ($T_{\tilde{s}}$) truncated solitons for $L=80$~Km and $\epsilon=10^{-4}$.}
    \label{FigTsolitons}
\end{figure}

The transmission rate $R$ in \eqref{eq:rates} of EM of truncated solitons can be computed based on the results of Fig. \ref{FigTsolitons}. The transmission rate of an OOK system is maximized when the pulse with minimum $T=\max\{T_s, T_{\tilde{s}}\}$ is used. As shown in Fig. \ref{FigTsolitons}, the minimum $T$ for OOK is $338$~ps for lossy fibers (at $E=1.2$~pJ), and $239$~ps for lossless fibers (at $E=1.8$~pJ). We compute the corresponding transmission rates and we include them in the first column of Table~\ref{TableRates}. Note that using the truncated soliton with $E=1.8$~pJ over lossy fiber decreases the transmission rate by $12.5$\% compared to using the one with $E=1.2$~pJ, since $T_{\tilde{s}}=386$~ps at $E=1.8$~pJ.  

Consider now the case with $M=4$ energies. The energy levels that satisfy \eqref{Eq:EnergiesEM} and yield the minimum possible $T$ (defined in \eqref{eq:Tmax}) are $\{0, 0.2, 0.8, 1.8\}$~pJ for both lossy and lossless fibers (as shown in Fig. \ref{FigTsolitons}). %The maximum transmission rate for $4$-EM is achieved when using the energy levels $\{0, 0.2, 0.8, 1.8\}$~pJ (using the energy levels defined in (\ref{Eq:EnergiesEM})) for both lossy and lossless fibers (as shown in Fig. \ref{FigTsolitons}). 
In Fig. \ref{Fig4EMSolitons} we show the truncated solitons for the different energy levels, where the duration of a soliton increases as the energy decreases. The modulation interval in this case is $T=1789$~ps which is the same for lossy and lossless fibers. The corresponding transmission rates are computed and included in the first column of Table \ref{TableRates}. Note that the transmission rates for $4$-EM is lower than the ones achieved by OOK (similarly to the results in \cite[Fig.~2(d)]{chen2020capacity}). In the following theorem we derive an upper bound for the transmission rate, which decays to zero as $M\to\infty$.

\begin{theorem}
\label{Th:UpperBound}
Given a maximum allowable bandwidth $W_{max}$, the transmission rate of $M$-EM of isolated solitons is upper-bounded by
\begin{equation*}
    R \leq \frac{\pi^2W_{max}\log_2 M}{(M-1)^2 \ln^2{\frac{2-\epsilon}{\epsilon}}} \text{ bit/s.}
\end{equation*}
\end{theorem}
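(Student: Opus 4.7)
The plan is to bound the modulation interval $T$ from below by the duration of the weakest nonzero soliton, and then translate the maximum bandwidth constraint into a lower bound on that duration. By (\ref{eq:rates}) and (\ref{eq:Tmax}), $R = (\log_2 M)/T$ and $T \geq T_{p_2}$, so it suffices to lower-bound the duration of the soliton carrying energy $E_2 = E_M/(M-1)^2$.

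First I would exploit the fact that, unlike ordinary linear pulse shapes where the energy scales as the square of the amplitude, for fundamental solitons (\ref{ESoliton}) shows that the amplitude is \emph{linear} in the energy: $A = \frac{E_s}{2}\sqrt{\gamma/|\beta_2|}$. Consequently, if $A_M$ denotes the amplitude of the highest-energy soliton, then the soliton with energy $E_2 = E_M/(M-1)^2$ has amplitude $A_2 = A_M/(M-1)^2$. Substituting into (\ref{TeffSoliton}) gives
\begin{equation*}
T_{s_2} \;=\; \frac{\sqrt{|\beta_2|}}{A_2\sqrt{\gamma}}\,\ln\!\frac{2-\epsilon}{\epsilon} \;=\; (M-1)^2\,T_{s_M}.
\end{equation*}
Thus the bandwidth bottleneck on the tallest soliton is magnified by a factor of $(M-1)^2$ when we look at the duration of the smallest one.

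Next I would invoke the bandwidth constraint (\ref{eq:BandConstr}) in the form $W_{s_M} \leq W_{\max}$ together with the constant time-bandwidth product (\ref{cSoliton}), namely $T_{s_M} W_{s_M} = \pi^{-2}\ln^2\!\frac{2-\epsilon}{\epsilon}$. This immediately gives
\begin{equation*}
T_{s_M} \;=\; \frac{1}{\pi^{2}\,W_{s_M}}\ln^{2}\!\frac{2-\epsilon}{\epsilon} \;\geq\; \frac{1}{\pi^{2}\,W_{\max}}\ln^{2}\!\frac{2-\epsilon}{\epsilon},
\end{equation*}
so that $T \geq T_{p_2} = T_{s_2} \geq \frac{(M-1)^{2}}{\pi^{2}W_{\max}}\ln^{2}\!\frac{2-\epsilon}{\epsilon}$. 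Dividing $\log_2 M$ by this lower bound on $T$ yields the claimed upper bound on $R$.

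There is no substantive obstacle: the only subtle point is remembering that the soliton family is parametrized so that amplitude is linear (not square-root) in energy, which is precisely the mechanism that forces the $(M-1)^2$ factor rather than a milder $(M-1)$. Everything else reduces to plugging into (\ref{ESoliton}), (\ref{TeffSoliton}), (\ref{WeffSoliton}) and (\ref{cSoliton}) and using the defining choice (\ref{eq:Tmax}) of the modulation interval.
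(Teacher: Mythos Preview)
Your proposal is correct and follows essentially the same argument as the paper: both obtain $T_{p_2}=(M-1)^2 T_{p_M}$ from the inverse proportionality of soliton duration and energy, then bound $T_{p_M}$ below via the time--bandwidth identity (\ref{cSoliton}) together with $W_{s_M}\le W_{\max}$, and finally combine $T\ge T_{p_2}$ with (\ref{eq:rates}). Your detour through the amplitudes $A_2=A_M/(M-1)^2$ is just a more explicit rendering of the paper's one-line observation that $T_s$ and $E_s$ are inversely proportional.
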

\begin{proof}
In the considered system model, $T_{p_2}$ and $T_{p_M}$ are the effective duration of the solitons with energies $E_2$ and $E_M$, respectively. $E_2=\frac{E_{M}}{(M-1)^2}$ by \eqref{Eq:EnergiesEM}, and since the effective duration and the energy of solitons are inversely proportional (by \eqref{ESoliton} and \eqref{TeffSoliton}), $T_{p_2}=(M-1)^2T_{p_M}$. By \eqref{cSoliton}, 
\begin{equation}
T_{p_2}=(M-1)^2T_{p_M}\geq \frac{(M-1)^2 \ln^2{\frac{2-\epsilon}{\epsilon}}}{\pi^2W_{max}}.
\label{eq:ThProofTp2}
\end{equation} 
By \eqref{eq:TxIsolation}, 
\begin{equation*}T\geq T_{p_2}\geq \frac{(M-1)^2 \ln^2{\frac{2-\epsilon}{\epsilon}}}{\pi^2W_{max}}.
\end{equation*} 
Finally, using \eqref{eq:rates} and the obtained lower bound on $T$ concludes the proof. 
\end{proof}

\begin{figure}[t!]
\centering
\begin{tikzpicture}[font=\small,
    Gclabel/.style={text=black},]
\begin{semilogyaxis}[
%\begin{axis}[
    xlabel={$t$ (ps)},
    xmin=-1000, xmax=1000,
    ymin=0.0, ymax=0.22,
    xtick={-900,-600,-300,0,300,600,900},
    legend style={at={(0.8,0.99)},anchor=north},
    ymajorgrids=true,
    xmajorgrids =true,
    %grid style=dashed,
   % legend cell align={left},   
    scale only axis,
    width=7.2cm,
    height=3.75cm,
]

\draw[violet, <->,  thick] (-894.5,0.0007) -- (894.5,0.0007)node[pos=0.5, below] {$T=1789$\,ps};
%\draw[violet, <->,  thick] (-894.5,0.12) -- (894.5,0.12)node[pos=0.25, above] {$T=1789$\,ps};

\addplot[
    color=black,
    thick,
    dotted,
    ]
    coordinates {
(-9.941318e+01,2.994204e-03)(-9.747152e+01,3.298214e-03)(-9.552985e+01,3.633082e-03)(-9.358819e+01,4.001939e-03)(-9.164652e+01,4.408229e-03)(-8.970486e+01,4.855748e-03)(-8.776320e+01,5.348673e-03)(-8.582153e+01,5.891600e-03)(-8.387987e+01,6.489591e-03)(-8.193821e+01,7.148215e-03)(-7.999654e+01,7.873598e-03)(-7.805488e+01,8.672478e-03)(-7.611322e+01,9.552265e-03)(-7.417155e+01,1.052110e-02)(-7.222989e+01,1.158793e-02)(-7.028822e+01,1.276258e-02)(-6.834656e+01,1.405583e-02)(-6.640490e+01,1.547947e-02)(-6.446323e+01,1.704646e-02)(-6.252157e+01,1.877093e-02)(-6.057991e+01,2.066833e-02)(-5.863824e+01,2.275549e-02)(-5.669658e+01,2.505070e-02)(-5.475492e+01,2.757380e-02)(-5.281325e+01,3.034619e-02)(-5.087159e+01,3.339089e-02)(-4.892992e+01,3.673251e-02)(-4.698826e+01,4.039713e-02)(-4.504660e+01,4.441218e-02)(-4.310493e+01,4.880613e-02)(-4.116327e+01,5.360808e-02)(-3.922161e+01,5.884709e-02)(-3.727994e+01,6.455130e-02)(-3.533828e+01,7.074675e-02)(-3.339661e+01,7.745567e-02)(-3.145495e+01,8.469445e-02)(-2.951329e+01,9.247092e-02)(-2.757162e+01,1.007810e-01)(-2.562996e+01,1.096050e-01)(-2.368830e+01,1.189025e-01)(-2.174663e+01,1.286082e-01)(-1.980497e+01,1.386262e-01)(-1.786331e+01,1.488259e-01)(-1.592164e+01,1.590389e-01)(-1.397998e+01,1.690577e-01)(-1.203831e+01,1.786391e-01)(-1.009665e+01,1.875111e-01)(-8.154987e+00,1.953858e-01)(-6.213324e+00,2.019774e-01)(-4.271660e+00,2.070237e-01)(-2.329996e+00,2.103096e-01)(-3.883327e-01,2.116879e-01)(1.553331e+00,2.110953e-01)(3.494995e+00,2.085593e-01)(5.436658e+00,2.041947e-01)(7.378322e+00,1.981919e-01)(9.319986e+00,1.907969e-01)(1.126165e+01,1.822890e-01)(1.320331e+01,1.729575e-01)(1.514498e+01,1.630826e-01)(1.708664e+01,1.529203e-01)(1.902830e+01,1.426928e-01)(2.096997e+01,1.325844e-01)(2.291163e+01,1.227404e-01)(2.485329e+01,1.132702e-01)(2.679496e+01,1.042507e-01)(2.873662e+01,9.573164e-02)(3.067829e+01,8.774043e-02)(3.261995e+01,8.028694e-02)(3.456161e+01,7.336762e-02)(3.650328e+01,6.696918e-02)(3.844494e+01,6.107143e-02)(4.038660e+01,5.564963e-02)(4.232827e+01,5.067631e-02)(4.426993e+01,4.612267e-02)(4.621160e+01,4.195952e-02)(4.815326e+01,3.815809e-02)(5.009492e+01,3.469048e-02)(5.203659e+01,3.153005e-02)(5.397825e+01,2.865159e-02)(5.591991e+01,2.603143e-02)(5.786158e+01,2.364754e-02)(5.980324e+01,2.147943e-02)(6.174490e+01,1.950823e-02)(6.368657e+01,1.771652e-02)(6.562823e+01,1.608830e-02)(6.756990e+01,1.460894e-02)(6.951156e+01,1.326501e-02)(7.145322e+01,1.204427e-02)(7.339489e+01,1.093555e-02)(7.533655e+01,9.928631e-03)(7.727821e+01,9.014245e-03)(7.921988e+01,8.183931e-03)(8.116154e+01,7.429994e-03)(8.310320e+01,6.745435e-03)(8.504487e+01,6.123889e-03)(8.698653e+01,5.559571e-03)(8.892820e+01,5.047222e-03)(9.086986e+01,4.582065e-03)(9.281152e+01,4.159758e-03)(9.475319e+01,3.776361e-03)(9.669485e+01,3.428290e-03)(9.863651e+01,3.112293e-03)(9.941318e+01,2.994204e-03)};
\addlegendentry{$E_4=1.8$\,pJ}

\addplot[
    color=blue,
   % pattern=on 1pt off 2pt on 5pt off 2pt,
%    mark=square,
%    mark repeat=8,
%    mark phase=1,
%    mark options={solid},
    ]
    coordinates {
(-2.236797e+02,1.330757e-03)(-2.193109e+02,1.465873e-03)(-2.149422e+02,1.614703e-03)(-2.105734e+02,1.778639e-03)(-2.062047e+02,1.959213e-03)(-2.018359e+02,2.158110e-03)(-1.974672e+02,2.377188e-03)(-1.930985e+02,2.618489e-03)(-1.887297e+02,2.884263e-03)(-1.843610e+02,3.176984e-03)(-1.799922e+02,3.499377e-03)(-1.756235e+02,3.854435e-03)(-1.712547e+02,4.245451e-03)(-1.668860e+02,4.676045e-03)(-1.625172e+02,5.150193e-03)(-1.581485e+02,5.672259e-03)(-1.537798e+02,6.247034e-03)(-1.494110e+02,6.879766e-03)(-1.450423e+02,7.576205e-03)(-1.406735e+02,8.342637e-03)(-1.363048e+02,9.185926e-03)(-1.319360e+02,1.011355e-02)(-1.275673e+02,1.113365e-02)(-1.231986e+02,1.225502e-02)(-1.188298e+02,1.348720e-02)(-1.144611e+02,1.484040e-02)(-1.100923e+02,1.632556e-02)(-1.057236e+02,1.795428e-02)(-1.013548e+02,1.973874e-02)(-9.698610e+01,2.169161e-02)(-9.261736e+01,2.382581e-02)(-8.824861e+01,2.615426e-02)(-8.387987e+01,2.868947e-02)(-7.951113e+01,3.144300e-02)(-7.514238e+01,3.442474e-02)(-7.077364e+01,3.764198e-02)(-6.640490e+01,4.109818e-02)(-6.203615e+01,4.479157e-02)(-5.766741e+01,4.871331e-02)(-5.329867e+01,5.284555e-02)(-4.892992e+01,5.715918e-02)(-4.456118e+01,6.161163e-02)(-4.019244e+01,6.614486e-02)(-3.582369e+01,7.068396e-02)(-3.145495e+01,7.513677e-02)(-2.708621e+01,7.939516e-02)(-2.271746e+01,8.333826e-02)(-1.834872e+01,8.683814e-02)(-1.397998e+01,8.976774e-02)(-9.611235e+00,9.201055e-02)(-5.242492e+00,9.347092e-02)(-8.737486e-01,9.408351e-02)(3.494995e+00,9.382016e-02)(7.863738e+00,9.269302e-02)(1.223248e+01,9.075321e-02)(1.660122e+01,8.808528e-02)(2.096997e+01,8.479863e-02)(2.533871e+01,8.101731e-02)(2.970745e+01,7.686999e-02)(3.407620e+01,7.248116e-02)(3.844494e+01,6.796458e-02)(4.281368e+01,6.341904e-02)(4.718243e+01,5.892639e-02)(5.155117e+01,5.455130e-02)(5.591991e+01,5.034230e-02)(6.028866e+01,4.633364e-02)(6.465740e+01,4.254739e-02)(6.902614e+01,3.899575e-02)(7.339489e+01,3.568308e-02)(7.776363e+01,3.260783e-02)(8.213237e+01,2.976408e-02)(8.650112e+01,2.714286e-02)(9.086986e+01,2.473317e-02)(9.523860e+01,2.252281e-02)(9.960735e+01,2.049896e-02)(1.039761e+02,1.864868e-02)(1.083448e+02,1.695915e-02)(1.127136e+02,1.541799e-02)(1.170823e+02,1.401336e-02)(1.214511e+02,1.273404e-02)(1.258198e+02,1.156953e-02)(1.301885e+02,1.051002e-02)(1.345573e+02,9.546416e-03)(1.389260e+02,8.670325e-03)(1.432948e+02,7.874007e-03)(1.476635e+02,7.150357e-03)(1.520323e+02,6.492861e-03)(1.564010e+02,5.895561e-03)(1.607698e+02,5.353011e-03)(1.651385e+02,4.860242e-03)(1.695072e+02,4.412725e-03)(1.738760e+02,4.006331e-03)(1.782447e+02,3.637303e-03)(1.826135e+02,3.302220e-03)(1.869822e+02,2.997971e-03)(1.913510e+02,2.721728e-03)(1.957197e+02,2.470920e-03)(2.000884e+02,2.243210e-03)(2.044572e+02,2.036473e-03)(2.088259e+02,1.848782e-03)(2.131947e+02,1.678382e-03)(2.175634e+02,1.523684e-03)(2.219322e+02,1.383241e-03)(2.236797e+02,1.330757e-03)
};
\addlegendentry{$E_3=0.8$\,pJ}

\addplot[
    color=red,
    dashed,
    %dash pattern=on 3pt off 1.5pt,
    %dash pattern=on 1pt off 2pt on 5pt off 2pt,
%    mark=square,
%    mark repeat=8,
%    mark phase=1,
%    mark options={solid},
    ]
    coordinates {
(-8.947186e+02,3.326893e-04)(-8.772436e+02,3.664682e-04)(-8.597687e+02,4.036758e-04)(-8.422937e+02,4.446598e-04)(-8.248187e+02,4.898033e-04)(-8.073437e+02,5.395276e-04)(-7.898688e+02,5.942970e-04)(-7.723938e+02,6.546222e-04)(-7.549188e+02,7.210657e-04)(-7.374439e+02,7.942461e-04)(-7.199689e+02,8.748442e-04)(-7.024939e+02,9.636087e-04)(-6.850189e+02,1.061363e-03)(-6.675440e+02,1.169011e-03)(-6.500690e+02,1.287548e-03)(-6.325940e+02,1.418065e-03)(-6.151190e+02,1.561758e-03)(-5.976441e+02,1.719942e-03)(-5.801691e+02,1.894051e-03)(-5.626941e+02,2.085659e-03)(-5.452192e+02,2.296481e-03)(-5.277442e+02,2.528388e-03)(-5.102692e+02,2.783412e-03)(-4.927942e+02,3.063755e-03)(-4.753193e+02,3.371799e-03)(-4.578443e+02,3.710099e-03)(-4.403693e+02,4.081390e-03)(-4.228943e+02,4.488570e-03)(-4.054194e+02,4.934686e-03)(-3.879444e+02,5.422903e-03)(-3.704694e+02,5.956453e-03)(-3.529945e+02,6.538565e-03)(-3.355195e+02,7.172367e-03)(-3.180445e+02,7.860750e-03)(-3.005695e+02,8.606185e-03)(-2.830946e+02,9.410494e-03)(-2.656196e+02,1.027455e-02)(-2.481446e+02,1.119789e-02)(-2.306696e+02,1.217833e-02)(-2.131947e+02,1.321139e-02)(-1.957197e+02,1.428980e-02)(-1.782447e+02,1.540291e-02)(-1.607698e+02,1.653622e-02)(-1.432948e+02,1.767099e-02)(-1.258198e+02,1.878419e-02)(-1.083448e+02,1.984879e-02)(-9.086986e+01,2.083456e-02)(-7.339489e+01,2.170953e-02)(-5.591991e+01,2.244193e-02)(-3.844494e+01,2.300264e-02)(-2.096997e+01,2.336773e-02)(-3.494995e+00,2.352088e-02)(1.397998e+01,2.345504e-02)(3.145495e+01,2.317325e-02)(4.892992e+01,2.268830e-02)(6.640490e+01,2.202132e-02)(8.387987e+01,2.119966e-02)(1.013548e+02,2.025433e-02)(1.188298e+02,1.921750e-02)(1.363048e+02,1.812029e-02)(1.537798e+02,1.699114e-02)(1.712547e+02,1.585476e-02)(1.887297e+02,1.473160e-02)(2.062047e+02,1.363782e-02)(2.236797e+02,1.258558e-02)(2.411546e+02,1.158341e-02)(2.586296e+02,1.063685e-02)(2.761046e+02,9.748937e-03)(2.935795e+02,8.920771e-03)(3.110545e+02,8.151958e-03)(3.285295e+02,7.441020e-03)(3.460045e+02,6.785714e-03)(3.634794e+02,6.183292e-03)(3.809544e+02,5.630702e-03)(3.984294e+02,5.124741e-03)(4.159044e+02,4.662169e-03)(4.333793e+02,4.239787e-03)(4.508543e+02,3.854498e-03)(4.683293e+02,3.503339e-03)(4.858042e+02,3.183510e-03)(5.032792e+02,2.892381e-03)(5.207542e+02,2.627504e-03)(5.382292e+02,2.386604e-03)(5.557041e+02,2.167581e-03)(5.731791e+02,1.968502e-03)(5.906541e+02,1.787589e-03)(6.081291e+02,1.623215e-03)(6.256040e+02,1.473890e-03)(6.430790e+02,1.338253e-03)(6.605540e+02,1.215061e-03)(6.780289e+02,1.103181e-03)(6.955039e+02,1.001583e-03)(7.129789e+02,9.093257e-04)(7.304539e+02,8.255549e-04)(7.479288e+02,7.494928e-04)(7.654038e+02,6.804321e-04)(7.828788e+02,6.177301e-04)(8.003538e+02,5.608024e-04)(8.178287e+02,5.091183e-04)(8.353037e+02,4.621954e-04)(8.527787e+02,4.195956e-04)(8.702537e+02,3.809211e-04)(8.877286e+02,3.458103e-04)(8.947186e+02,3.326893e-04)
};
\addlegendentry{$E_2=0.2$\,pJ}

\end{semilogyaxis} 
%\end{axis}    
\end{tikzpicture}
    \caption{Used pulses in $4$-EM of truncated solitons ($E_1=0$~pJ not shown).}
    \label{Fig4EMSolitons}
\end{figure}
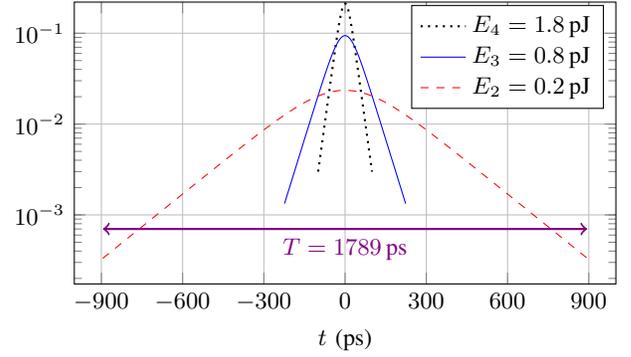

%We derive an approximation for the transmission rates for $M$-EM of truncated solitons with $M\geq4$; since the effective duration and energy of solitons are inversely proportional (by (\ref{ESoliton}) and (\ref{TeffSoliton})), then $T\approx 199 (M-1)^2$~ps which is equal to $T_s$ at $E_2=E_{M}/(M-1)^2$, where $T_s\approx199$~ps at the maximum energy level $E_{M}=1.8$~pJ. Therefore, the transmission rate of $M$-EM of isolated solitons is approximately $5\frac{\log_2 (M)}{(M-1)^2}$~Gbit/s for $M\geq 4$.
Based on \eqref{eq:Tmax} and Fig. \ref{FigTsolitons}, $T=T_{p_2}$ for $M\geq4$ since $E_2$ falls in the region with no broadening. Using $E_M=1.8$~pJ (i.e. $p_M(t)$ is the truncated soliton with bandwidth $W_{max}$) maximizes the transmission rate for $M\geq4$ where the resulting $T=T_{p_2}$ is equal to the right hand side of \eqref{eq:ThProofTp2}. Hence, the transmission rates achieve the upper bound of Theorem \ref{Th:UpperBound} with equality $R=5\frac{\log_2M}{(M-1)^2}$~Gbit/s for $M \geq 4$ (for the considered simulation parameters). The main weakness of solitons shows up here: OOK outperforms $M$-EM systems with $M\geq4$ and the rates decrease if $M$ increases.  It is possible to look for the optimal energy levels (instead of the ones in (\ref{Eq:EnergiesEM})) to improve the rates as done in  \cite{chen2020capacity}, where the optimal achievable rate depends on the noise model. In this paper we do not search for the optimal energy levels since we consider noiseless fibers and because our main goal is to introduce the MTB waveforms.

\begin{table}[t!]
    \centering
    \caption{Transmission rates (Gbit/s) for fiber length $L=80$Km and at an isolation level $(1-10^{-4})$. We show the percentage increase with respect to OOK of truncated solitons.}
    \begin{tabular}{|c|c|c|}
    \hline   & Truncated solitons  & MTB pulses    \\
    \hline    OOK, lossless fibers&  $4.18$  &  $4.68$ ($12\%$ increase)\\
    \hline    OOK, lossy fibers &  $2.96$ &  $3.96$ ($33.8\%$ increase)\\
    \hline    $4$-EM, lossless fibers& $1.12$ & $7.12$ ($70\%$ increase)\\
    \hline    $4$-EM, lossy  fibers& $1.12$   & $7.09$ (164$\%$ increase)\\
    \hline    \begin{tabular}{@{}c@{}}$M$-EM, $M \geq 4$ \\ Lossless \& lossy fibers\end{tabular}  & $ 5\frac{\log_2M}{(M-1)^2}$ & $\approx 3.5\log_2M$\\
    \hline
    \end{tabular}
    \label{TableRates}
\end{table}

For sample values of the energy level $E$, we compute  the time-bandwidth product of truncated solitons (based on   \eqref{eq:Tmax}, \eqref{eq:WeqMax}, and \eqref{eq:TimBandProd}) as
\begin{equation*}
    c=\max\{T_s, T_{\tilde{s}}\}\cdot \max\{W_s, W_{\tilde{s}}\}.
\end{equation*}
The obtained numerical results are shown in Fig. \ref{FigTBproduct}. %Note that the computed time-bandwidth product is different than $c_s=9.94$ (computed using (\ref{cSoliton}) for nontruncated solitons) for lossless fibers and it changes with $E$ due to the effect of truncating of solitons. For lossy fibers, the variation of the computed  time-bandwidth product is due to the effect of fiber losses in addition to the effect of truncation.
Note that for $E\leq 0.9$~pJ, the time-bandwidth product for truncated solitons is almost equal to the one of nontruncated solitons ($c_s=9.94$ computed using (\ref{cSoliton})), since the truncated solitons do not broaden for this range of $E$ (as shown in Fig. \ref{FigTsolitons}).
Using the results of Fig. \ref{FigTBproduct}, we compute the SE in \eqref{eq:SE} for OOK (shown in Fig. \ref{FigTsolitons}).  The obtained SEs are shown in Table \ref{TableSE}. Note that the spectral efficiency increases as $\epsilon$ increases, where $0.33$~bit/s/Hz was reported in \cite{yousefi2014Part3} for OOK of solitons when using $\epsilon=10^{-2}$. The used $\epsilon=10^{-4}$ here is considered ``practically reasonable'' in \cite{span2019time}, where the approximated time-bandwidth product of fundamental solitons in \cite{span2019time} is similar to the one obtained here by (\ref{cSoliton}) and yields a similar spectral efficiency around $0.1$~bit/s/Hz for OOK of solitons.

For $M$-EM of isolated pulses, the time-bandwidth product depends on $T$  and $W$ that are defined in (\ref{eq:Tmax}) and (\ref{eq:WeqMax}). For the $4$-EM of solitons shown in Fig. \ref{FigTsolitons}, $W=50$~GHz which occurs at $E=1.8$~pJ, and  $T=1789$~ps which is the duration of the soliton at $0.2$~pJ, then the resulting time-bandwidth product is $\approx89.5$ and the corresponding spectral efficiency is $\log_2(4)/89.5\approx 0.022$~bit/s/Hz. The spectral efficiency for $M\geq4$ can be obtained form Table \ref{TableRates} by dividing the transmission rate by $50$~GHz (the bandwidth at $1.8$~pJ). Similarly to the transmission rates of truncated solitons, the spectral efficiency decreases when $M$ increases.

\begin{figure}[t!]
\centering
\begin{tikzpicture}[font=\small,
    Gclabel/.style={text=black},]
\begin{axis}[
    ylabel={Time-Bandwidth product},
    xlabel={$E_s$ (pJ)},
    xmin=0, xmax=1.8,
    ymin=7.8, ymax=20,
    xtick={0,0.3,0.6,0.9,1.2,1.5,1.8},
    ytick={8,10,12,14,16,18,20},
    legend style={at={(0.35,0.98)},anchor=north},
    ymajorgrids=true,
    xmajorgrids =true,
    legend cell align={left},    
    scale only axis,
    width=7cm,
    height=4.5cm,
]

%\draw[black, -, thick, dotted] (0.9,7.5) -- (0.9,20);
%\draw[black, ->,  thick] (0.92,11) -- (1.6,11)node[pos=0.5,yshift=5] {Broadening};
%\draw[black, ->,  thick] (0.88,11) -- (0.05,11)node[pos=0.5,yshift=6] {No broadening of truncated solitons};

\addplot[color=black ]coordinates{(0.01, 9.94)(0.6, 9.94)(1.2, 9.94)(1.8, 9.94)};\addlegendentry{Solitons ($c_s=9.94$)}

\addplot[color=red,
    dashed,dash pattern=on 4pt off 4pt on 4pt off 4pt,
    ]
coordinates{%(1.000000e-04, 9.241094e+00)(2.000000e-01, 9.163437e+00)(4.000000e-01, 9.163437e+00)(6.000000e-01, 9.396406e+00)(8.000000e-01, 9.784687e+00)(1, 1.023121e+01)(1.200000e+00, 1.075539e+01)(1.400000e+00, 1.118250e+01)(1.600000e+00, 1.141547e+01)(1.800000e+00, 1.195906e+01)
(1.000000e-04,1.000967e+01)(3.000000e-01,9.976420e+00)(6.000000e-01,1.0033668e+01)(9.000000e-01,9.996654e+00)(1.200000e+00,1.073821e+01)(1.500000e+00,1.132986e+01)(1.800000e+00,1.208904e+01)

};\addlegendentry{Truncated Solitons, Lossless}%(1.000000e-04,1.000967e+01)(2.000000e-01,9.976420e+00)(3.000000e-01,9.976420e+00)(4.000000e-01,1.003462e+01)(5.000000e-01,1.017595e+01)(6.000000e-01,1.033668e+01)(7.000000e-01,1.027571e+01)(8.000000e-01,9.959793e+00)(9.000000e-01,9.996654e+00)(1,1.066579e+01)(1.100000e+00,1.094541e+01)(1.200000e+00,1.073821e+01)(1.300000e+00,1.084007e+01)(1.400000e+00,1.149995e+01)(1.500000e+00,1.132986e+01)(1.600000e+00,1.165005e+01)(1.700000e+00,1.209147e+01)(1.800000e+00,1.208904e+01)
\addplot[color=blue, 
    dash pattern=on 5pt off 1pt on 1pt off 1pt,
    ]
    coordinates{(1.000000e-04,1.000967e+01)(3.000000e-01,9.965335e+00)(6.000000e-01,9.943165e+00)(9.000000e-01,9.996654e+00)(1.200000e+00,1.124126e+01)(1.500000e+00,1.449000e+01)(1.800000e+00,1.919399e+01)
};\addlegendentry{Truncated Solitons, Lossy}%(1.000000e-02,2.059133e+03)(1.000000e-01,1.000967e+01)(2.000000e-01,9.976420e+00)(3.000000e-01,9.965335e+00)(4.000000e-01,9.959793e+00)(5.000000e-01,9.949816e+00)(6.000000e-01,9.943165e+00)(7.000000e-01,9.938415e+00)(8.000000e-01,9.934852e+00)(9.000000e-01,9.996654e+00)(1,1.012699e+01)(1.100000e+00,1.054848e+01)(1.200000e+00,1.124126e+01)(1.300000e+00,1.220184e+01)(1.400000e+00,1.320105e+01)(1.500000e+00,1.449000e+01)(1.600000e+00,1.598779e+01)(1.700000e+00,1.758152e+01)(1.800000e+00,1.919399e+01)

\definecolor{clr1}{RGB}{50,230,50}

\addplot[color=clr1, mark=square, ]coordinates{(0.01, 8.37)(0.3, 8.37)(0.6, 8.37)(0.9, 8.37)(1.2, 8.37)(1.5, 8.37)(1.8, 8.37)};\addlegendentry{MTB, Dispersion-only}

\addplot[color=red,
    dashed,
    dash pattern=on 4pt off 4pt on 4pt off 4pt,
    mark=star,
    mark options={solid},
    ]
    coordinates {
(1.000000e-04,8.372861e+00)(3.000000e-01,8.703292e+00)(6.000000e-01,8.98)(0.9,9.252926e+00)(1.200000e+00,9.597602e+00)(1.500000e+00,1.017378e+01)(1.800000e+00,1.068718e+01) };
\addlegendentry{MTB, Lossless}

\addplot[
   color=blue,dash pattern=on 5pt off 1pt on 1pt off 1pt, 
    mark=o, 
    mark options={solid},
    ]
    coordinates {
(1.000000e-04,8.372861e+00)(3.000000e-01,8.382047e+00)(6.000000e-01,8.422805e+00)(0.9,8.347031e+00)(1.200000e+00,8.344789e+00)(1.500000e+00,8.975827e+00)(1.800000e+00,1.158314e+01)
};
\addlegendentry{MTB, Lossy}

\draw[color=orange, rotate around={0:(0.45,10)}](0.45,10) ellipse  (0.03 and 0.6);
\draw[orange, ->] (0.45,11.5)--  (0.45,10.7) node[pos=-0.8] {Solitons};

\draw[color=orange, rotate around={0:(0.75,8.75)}](0.75,8.75) ellipse  (0.03 and 0.75);
\draw[orange, ->] (0.75,11.5)--  (0.75,9.6) node[pos=-0.3] {MTB};

\end{axis}    
\end{tikzpicture}
    \caption{Time-bandwidth product of MTB pulses and solitons for $L=80$~Km and $\epsilon=10^{-4}$.}
    \label{FigTBproduct}
\end{figure}
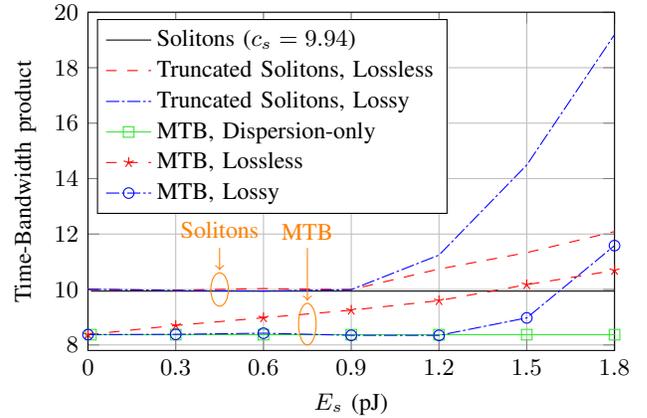

\begin{table}[t!]
    \centering
    \caption{Spectral efficiency (bit/s/Hz) for fiber length $L=80$~Km and at an isolation level $(1-10^{-4})$.}
    \begin{tabular}{|c|c|c|}
    \hline   & Truncated solitons  & MTB pulses    \\
    \hline    OOK, lossless fibers&  $0.084$  &  $0.094$\\
    \hline     OOK, lossy fibers &  $0.089$ &  $0.111$\\
    \hline    \begin{tabular}{@{}c@{}}$4$-EM,  \\ Lossless \& lossy fibers\end{tabular}& $0.022$   & $0.14$\\
    \hline    \begin{tabular}{@{}c@{}}$M$-EM, $M \geq 4$ \\ Lossless \& lossy fibers\end{tabular}  & $ 0.1\frac{\log_2M}{(M-1)^2}$ & $\approx 0.07\log_2M$\\
    \hline
    \end{tabular}
    \label{TableSE}
\end{table}

\section{Energy Modulation of Waveforms with Minimum Time Broadening}
\label{Sec:MTBWaveforms}

We consider maximizing the transmission rate in (\ref{eq:rates}) by finding the optimal pulses $\left\{p_m(t)\right\}_{m\in\mathcal{M}}$ that allows to use the minimum possible $T$ (defined in \eqref{eq:Tmax}). Minimizing the right hand side of (\ref{eq:Tmax}) can be done by independently finding the optimal pulses that minimize the maximum of each pair $\{T_{p_m}, T_{\Tilde{p}_m}\}_{m=2}^{M}$.

We consider the general problem of finding the optimal pulse $p(t)$ that has an energy $E$ and achieves the minimum possible value of $\max\{T_p, T_{\Tilde{p}}\}$. To solve this problem, we first consider the problem of finding the pulse $p(t)$ that achieves the minimum $T_{\Tilde{p}}$ for a given value of $T_p$:
\begin{equation}
   \begin{aligned}
    p^*(t) &=  \,\,\,\, \underset{p(t)}{\text{argmin}} && T_{\Tilde{p}}\\
   & \text{subject to} && p(t)=0 \text{ for } t\notin [-T_p/2, \, T_p/2] \\
   & && \int_{-T_{p}/2}^{T_{p}/2} |p(t)|^2dt = E\\
   & && \int_{-W_{max}}^{W_{max}} \left|P(f)\right|^2df \geq (1-\epsilon)E%\\
   %& && W_{q,L}  \leq W_{max}
   \end{aligned},
   \label{eq:OptProb}
\end{equation}
where the third constraint is used to satisfy (\ref{eq:BandConstr}).
To find the optimal solution of (\ref{eq:OptProb}), we use a numerical method that is similar to the one proposed in \cite{jaffal2022time}; we write $p(t)$ as a linear combination of normalized truncated PSWFs (see (9) in \cite{jaffal2022time} and the appendix therein) and we find the optimal  combinations of normalized truncated PSWFs that minimize the objective function in (\ref{eq:OptProb}). We use the software in \cite{adelman2014software} to generate the PSWFs, and we use the MATLAB function \textit{`fmincon'} to find the optimal solution of (\ref{eq:OptProb}).

For chosen sample values of $T_{p}$, we search for the corresponding optimal pulses $p^*(t)$ that yield $T_{\Tilde{p}^*}$ at the receiver. 
Using higher values of $T_p$ implies a more relaxed constraint on the time-limitation of $p(t)$, and hence, the corresponding $T_{\Tilde{p}^*}$ does not increase as we increase $T_p$. Therefore the minimum value of $\max\{T_p, T_{\Tilde{p}}\}$ is achieved when $T_{\Tilde{p}^*}=T_p$. We refer to the corresponding optimal pulse as the MTB pulse since it achieves the minimum time broadening.

%Note that $T_{p_m}$ and $T_{\Tilde{p}_m}$ are dependent because they depend on the same transmitted signal $p_{m}(t)$. and they depend on $E_m$, and as they may be independent from other energy levels, we consider the problem of minimizing the maximum of $\{T_{p}, T_{\Tilde{p}} \}$ for a given energy level $E$. consider optimizing the pair $T_{p}$ and $T_{\Tilde{p}}$ for sample values of  on pulses with different energy levels, we solve the problem in two stages. In the first stage, we search for the optimal pulse $p(t)$ that has the minimum  consider the problem of finding the optimal pulse $p(t)$ that has an energy $E$ and a propagated pulse minimizes 
%searches for the pulses that achieves the minimum time broadening.

%\section{Numerical Results}
%\label{Sec:NumSim}

In the following, we present the numerical results for three models of the channel: dispersion-only channel, lossless channel, and lossy channel. Although the lossless and lossy channels are more practically relevant, we present the results for the dispersion-only channel too since it approximates well the fiber model in the low power regime. 

\subsection{Dispersion-only channel}
\label{Sec:NumResDO}

We consider the following values of the parameters in (\ref{NLSEq}): $\beta_2=-21.7$~$\text{ps}^2$/Km, $\gamma=0$~$\text{W}^{-1}\text{Km}^{-1}$, and $\alpha=0$~dB/Km. As the dispersion-only channel is linear, the time broadening of a pulse is not affected by its energy. Consequently, the solutions of (\ref{eq:OptProb}) for different $E$ are scaled versions of each other and they yield the same $T_{\tilde{p}^*}$. In this section we present the results of (\ref{eq:OptProb}) for $E=1$~pJ. 

\begin{figure}[t!]
\centering
\begin{tikzpicture}[font=\small,
    Gclabel/.style={text=black},]
\begin{axis}[
    xlabel={$T_p$ (ps)},
    ylabel={$T_{\tilde{p}}$ (ps)},
    xmin=200, xmax=400,
    ymin=265, ymax=330,
    xtick={200,240, 280, 320, 360, 400},
    ytick={280,320},
    legend style={at={(0.84,0.99)},anchor=north},
    ymajorgrids=true,
    xmajorgrids =true,
    %grid style=dashed,
   % legend cell align={left},   
    scale only axis,
    width=7cm,
    height=3.8cm,
]
%\filldraw[color=blue!75, fill=blue!10, rotate around={-19:(0.3,0.283)}](0.3,0.283) ellipse  (0.019 and 0.006);

\addplot[
    color=red,
    mark=square,
%    mark repeat=8,
%    mark phase=1,
%    mark options={solid},
    ]
    coordinates {
(200.000,321.996)(240.000,297.876)(280.000,286.869)(320.000,279.141)(360.000,276.800)(400.000,275.551)
    };
\addlegendentry{$T_{\tilde{p}^*}$}

\addplot[
    color=black,
    dashed,
    ]
    coordinates {
(260,260)(330,330)
    };
\addlegendentry{$T_{\tilde{p}}=T_p$}

\addplot[
    color=blue,
    mark=*,
    only marks,
    ]
    coordinates {
(285.5,285.5)
    };
\draw[blue, ->] (285.5,274) -- (285.5,283.75) node[pos=-0.5,xshift=40] {Minimum of $\max\{T_{p},T_{\tilde{p}^*}\}$};
%\draw[blue, ->] (285.5,275) -- (285.5,283.75) node[pos=-0.5,xshift=15] {$\left(285.5 , 285.5\right)$};

\draw[black!75, rotate around={-38:(241,304)}](241,304) ellipse  (53 and 8);
\draw[black, ->] (260,322) -- (248,312) node[pos=-0.4] {$\max\{T_{p},T_{\tilde{p}^*}\}=T_{\tilde{p}^*}$};

\draw[black!75, rotate around={-8.5:(343.5,280.5)}](343.5,280.5) ellipse  (57 and 6);
\draw[black, ->] (345,295) -- (341,287.5) node[pos=-0.4] {$\max\{T_{p},T_{\tilde{p}^*}\}=T_p$};

\end{axis}    
\end{tikzpicture}
\caption{Obtained $T_{\tilde{p}^*}$ for sample values of $T_p$, dispersion-only channel for $E=1$~pJ, $L=80$~Km, and $\epsilon=10^{-4}$.}
    \label{FigODc10to20}
\end{figure}
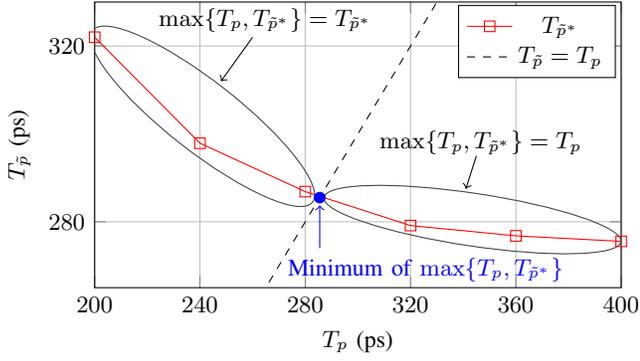

We found the optimal solution $p^*(t)$ of (\ref{eq:OptProb}) for $T_p\in\{200, 240, 280, 320, 360, 400 \}$~ps and in Fig. \ref{FigODc10to20} we show the obtained $T_{\tilde{p}^*}$. The straight line in Fig. \ref{FigODc10to20} creates two regions, in the left one $\max\{T_p, T_{\Tilde{p}^*}\}=T_{\Tilde{p}^*}$ and in the right one $\max\{T_p, T_{\Tilde{p}^*}\}=T_{p}$.
At the intersection, the minimum value of $\max\{T_p, T_{\Tilde{p}^*}\}$ is achieved where $T_{\Tilde{p}^*}=T_p$, hence the modulation interval (defined in (\ref{eq:Tmax})) is $T\approx285.5$~ps and the transmission rate $R$ in \eqref{eq:rates} is $3.5\log_2M$~Gbit/s.
In Fig. \ref{FigODpulse} we show the propagation of the obtained pulse $p^*(t)$ that achieves the minimum time broadening for $T_p=285.5$~ps. We refer to this pulse as the MTB pulse. %We also show in the same figure the magnitude of the propagated MTB pulse $\tilde{p}^*(t)$. 
Using the bandwidth definition in (\ref{eq:Banddef}), we found that the effective bandwidth of $p^*(t)$ is equal to $29.3$~GHz, which is also equal to the effective bandwidth of $\tilde{p}^*(t)$ since the dispersion-only channel models an all-pass filter. Then the spectral efficiency defined in (\ref{eq:SE}) is equal to $0.12\log_2M$~bit/s/Hz for the dispersion only channel. %Note that $\epsilon$ affects $T$ and $W$ where they decrease as $\epsilon$ increases, and hence the data rates and the spectral efficiency can be increased at the cost of reduced isolation between the consecutive pulses. 

%\begin{figure}[t!]
%\centering
%\input{FigDOPulse.tikz}
%\caption{Obtained MTB pulse for dispersion-only channel, $E=1$~pJ and $L=80$~Km.}
%    \label{FigODpulse}
%\end{figure}

The obtained transmission rate depends mainly on the length of the fiber and the chosen isolation level. Therefore, higher transmission rates can be obtained for lower distances or for a lower level of isolation (i.e., higher $\epsilon$). For example, the MTB pulse for $L=40$~Km is found to have $T_{p^*}=T_{\tilde{p}^*}=202$~ps which yields a transmission rate of $5\log_2{M}$~Gbit/s, i.e., an increase of $66.7$\%. However the improvement in the spectral efficiency is lower (an increase of $2.8$\%) since the effective bandwidth of the MTB pulse is $40.3$~GHz (compared to $29.3$~GHz for $L=80$~Km).

\begin{figure}[t!]
\centering
\input{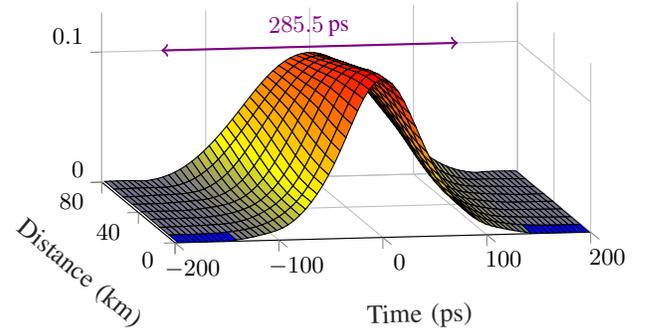}
\caption{Propagation of the MTB pulse, dispersion-only channel for $E=1$~pJ and $L=80$~Km.}
    \label{FigODpulse}
\end{figure}

%Note that the isolation between the consecutive pulses is not required for the dispersion-only channel, where the received signal is $|q(t,L)| = \sum_{k\in\mathbb{Z}} \Tilde{p}_{m_k}(t-kT)$ when $q(t,0) = \sum_{k\in\mathbb{Z}} p_{m_k}(t-kT)$ is transmitted even if $T$ does not satisfy \eqref{eq:TxIsolation} and \eqref{eq:RxIsolation}. In this case, decreasing $T$ requires an equalizer at the receiver to suppress the resulting intersymbol interference, where the MTB pulses are desirable since they result in a low number of interfering symbols.
Since the dispersion-only channel is a linear channel, the received signal is $|q(t,L)| = \sum_{k\in\mathbb{Z}} \Tilde{p}_{m_k}(t-kT)$ when $q(t,0) = \sum_{k\in\mathbb{Z}} p_{m_k}(t-kT)$ is transmitted even if the consecutive pulses are not isolated, i.e., if $T$ does not satisfy \eqref{eq:TxIsolation} and \eqref{eq:RxIsolation}. In this case, decreasing $T$ requires an equalizer at the receiver to suppress the resulting intersymbol interference. The complexity of the equalizer depends on the number of interfering symbols. For example, the complexity of the Viterbi equalizer increases exponentially with the number of interfering symbols \cite{jaffal2022time}.  In this context, the MTB pulses are desirable as they result in a low number of interfering symbols.

\subsection{Lossless and Lossy Channels}
\label{SectCompare}

%Dispersive only  channels approximates the NLSE in the low-power regime with the nonlinear effect is negligible. 

%\subsubsection{MTB Pulses}
%We obtained the solutions of (\ref{eq:OptProb}) for sample values of the energy level $E$ that are allowed by solitons, i.e. for $E\leq 1.8$~pJ (for the considered simulation parameters).

In this section we use the simulation parameters that we used in Sec. \ref{Sec:Solitons} ($L=80$~Km, $\epsilon=10^{-4}$, $W_{max}=50$~GHz, $\beta_2=-21.7$~$\text{ps}^2$/Km, $\gamma=1.2$~$\text{W}^{-1}\text{Km}^{-1}$, $\alpha=0$~dB/Km for the lossless fiber, and $\alpha=0.2$~dB/Km for the lossy fiber).  
We obtained the MTB pulses for sample values of the energy level $E$ that are allowed by solitons, i.e., for $E\leq 1.8$~pJ (for the considered simulation parameters).
For a given energy level $E$, we numerically solve (\ref{eq:OptProb}) for sample values of $T_p$ and we find the MTB pulse that minimizes $\max\{T_p, T_{\Tilde{p}}\}$ (as done in Sec. \ref{Sec:NumResDO} and Fig. \ref{FigODc10to20}).
We show the obtained duration of MTB pulses in Fig. \ref{FigTMTB}, where the results for the dispersion-only channel do not change with $E$ since the dispersion-only channel is linear. The results for lossless and lossy fibers converge to the one of the dispersion-only fibers as $E\to 0$. This is expected since the Kerr effect has lower impact in lower power regimes, where the NLS equation can be approximated by neglecting the nonlinear term (i.e., using $\gamma=0$). For the lossless fibers, the duration of the MTB pulses decreases as $E$ increases, where the minimum duration is $213.9$~ps (achieved at $E=1.8$~pJ). For lossy fibers, the duration of the MTB pulses increases as $E$ increases for $E\leq1.5$~pJ, reaches a minimum of $252.7$~ps at $E=1.5$~pJ, and then increases as $E$ increases beyond $1.5$~pJ. % The obtained MTB pulses achieve lower values of the minimum of $\max\{T_p, T_{\Tilde{p}}\}$ compared to truncated solitons, and hence they outperform truncated solitons in the considered EM system. As shown in Fig. \ref{FigTMTB}, the minimum values of $T$ (defined in (\ref{eq:Tmax})) for OOK are $252.7$~ps and  $213.9$~ps for lossy and lossless fibers respectively. 

The transmission rates for OOK are computed based on the minimum duration (as we did for truncated solitons in Sec. \ref{Sec:Solitons}), and we include the corresponding transmission rates in the second column of Table \ref{TableRates}. The obtained results show improvements (compared to truncated solitons) of $33.8$\% and $12$\% for OOK over lossy and lossless fibers respectively. For $4$-EM, we show in Fig. \ref{FigTMTB} the energy levels that maximize the transmission rates given by (\ref{eq:rates}) while satisfying (\ref{Eq:EnergiesEM}) and (\ref{eq:Tmax}), where the values of $T$ are $282$~ps and $281$~ps for lossy and lossless fibers respectively. The transmission rates are then computed and shown in Table \ref{TableRates}. Note that we compute the percentage increase with respect to OOK of truncated  solitons since OOK is the best option when using truncated solitons. We also derive an approximation for the transmission rates of $M$-EM of MTB pulses for $M\geq4$; The modulation interval $T$ can be approximated by the duration of the MTB pulse of the dispersion-only fiber $T\approx285.5$~ps, and hence the transmission rates are approximated by $3.5\log_2{M}$~Gbit/s. Therefore, increasing $M$ increases the transmission rates of EM of isolated MTB pulses, which is the main advantage of MTB pulses over the truncated solitons. Comparing Fig. \ref{FigTsolitons} and Fig. \ref{FigTMTB}, we notice that the obtained MTB pulses achieve lower values of the minimum of $\max\{T_p, T_{\Tilde{p}}\}$ compared to truncated solitons, and hence they outperform truncated solitons in the considered EM system. In Fig. \ref{FigMTB4EM} and  Fig. \ref{FigRxMTB4EM} we show the transmitted and received MTB pulses in the  $4$-EM system over lossless fiber, which shows that it is possible to use low energy pulses at the cost of a slight increase in $T$. 

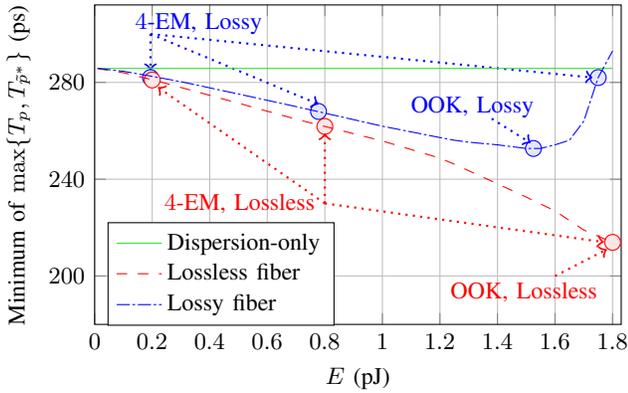
\begin{figure}[t!]
\centering
\begin{tikzpicture}[font=\small,
    Gclabel/.style={text=black},]
\begin{axis}[
    xlabel={$E$ (pJ)},
    ylabel={Minimum of $\max\{T_{p},T_{\tilde{p}^*}\}$ (ps)},
    xmin=0, xmax=1.83,
    ymin=180, ymax=312,
    xtick={0,0.2,0.4,0.6,0.8, 1, 1.2,1.4, 1.6,1.8},
    ytick={200,240,280,320},
    legend style={at={(0.24,0.32)},anchor=north},
    ymajorgrids=true,
    xmajorgrids =true,
    legend cell align={left},    
    scale only axis,
    width=7cm,
    height=4.25cm,
]
%\definecolor{clr1}{RGB}{100,30,210}
%\definecolor{clr2}{RGB}{210,30,110}
%clr1=red;

\filldraw[color=blue, fill=blue!10](1.75,282) ellipse  (3pt and 3pt);
\draw[blue, ->, dotted, thick] (0.194,300) -- (1.73,282);

\filldraw[color=blue, fill=blue!10](0.778,268) ellipse  (3pt and 3pt);
\draw[blue, ->, dotted, thick] (0.194,300) -- (0.778,271)node[pos=-0.15, right] {$4$-EM, Lossy};

\filldraw[color=blue, fill=blue!10](0.194,282) ellipse  (3pt and 3pt);
\draw[blue, ->, dotted, thick] (0.194,300) -- (0.194,285);

\filldraw[color=red, fill=red!10](1.8,213.9) ellipse  (3pt and 3pt);
\draw[red, ->, dotted, thick] (0.8,230) -- (1.77,214)node[pos=-0.00, left] {$4$-EM, Lossless};

\filldraw[color=red, fill=red!10](0.8,261.8) ellipse  (3pt and 3pt);
\draw[red, ->, dotted, thick] (0.8,230) -- (0.8,259);

\filldraw[color=red, fill=red!10](0.2,281) ellipse  (3pt and 3pt);
\draw[red, ->, dotted, thick] (0.8,230) -- (0.22,278);

%\filldraw[color=red!75, fill=red!10](1.8,210.8) ellipse  (3pt and 3pt);
\draw[red, ->, dotted, thick] (1.6,200) -- (1.78,211) node[pos=-0.6] {OOK, Lossless};

\filldraw[color=blue, fill=blue!10](1.525,252.7) ellipse  (3pt and 3pt);
\draw[blue, ->, dotted, thick] (1.38,265) -- (1.515,255) node[pos=-0.5] {OOK, Lossy};

\definecolor{clr3}{RGB}{50,230,50}

\addplot[color=clr3, ]coordinates{(0.01, 285.8)(1.8, 285.8)};\addlegendentry{Dispersion-only}

\addplot[
    color=red,
    dashed,
    dash pattern=on 4pt off 4pt on 4pt off 4pt,
    ]
    coordinates {
(0.0100,285.7)(0.1000,284.0)(0.2000,281.1)(0.3000,277.9)(0.5000,271.4)(0.7000,264.8)(1.0000,255.8)(1.2000,248.7)(1.2500,246.2)(1.3000,243.5)(1.4000,238.1)(1.5000,232.6)(1.6000,226.9)(1.7000,219.0)(1.8000,213.9)};
\addlegendentry{Lossless  fiber}

\addplot[
    color=blue,dotted, dash pattern=on 5pt off 1pt on 1pt off 1pt,
    ]
    coordinates {
    (0.0100,285.8)(0.1000,284.6)(0.2000,282.5)(0.3000,280.3)(0.5000,275.2)(0.7000,270.0)(1.0000,261.9)(1.2000,257.2)(1.2500,256.1)(1.3000,255.3)(1.4000,254.2)(1.5000,252.7)(1.5500,252.7)(1.6000,254.2)(1.6500,256.1)(1.7000,264.0)(1.7500,282.0)(1.8000,293.0)};
\addlegendentry{Lossy fiber}

\end{axis}    
\end{tikzpicture}
    \caption{Achieved duration of MTB pulses for $L=80$~Km and $\epsilon=10^{-4}$.}
    \label{FigTMTB}
\end{figure}

%In Fig. \ref{FigMTB4EM} and  Fig. \ref{FigRxMTB4EM} we show the transmitted and received MTB pulses in the  $4$-EM system over lossless fiber, which shows that it is possible to use low energy pulses at the cost of a slight increase in $T$.  %Moreover, decreasing the average energy of $M$-EM of MTB pulses has negligible effects on the transmission rates; For example, a transmission rate of  $\approx7$~Gbit/s can be achieved when using $\{0,0.2,0.8,1.8\}$~fJ in $4$-EM of isolated MTB pulses. Note that reducing the average transmitted energy in practical systems is mainly limited by the sensitivity of the receiver and the added noise therein.

We compute the time-bandwidth product of the MTB pulses using
\begin{equation*}
   c= \max\{T_{p^*}, T_{\tilde{p}^*}\}\cdot \max\{W_{p^*}, W_{\tilde{p}^*}\},
\end{equation*}
and we include them in Fig. \ref{FigTBproduct}. The obtained results show that the MTB pulses have lower time-bandwidth product compared to truncated solitons. Similarly to Sec. \ref{Sec:Solitons}, we compute the spectral efficiency of OOK and $4$-EM of MTB pulses and we include them in Table \ref{TableSE}. The approximation of the spectral efficiency for $M\geq4$ is obtained by dividing the approximation of the transmission rates (in Table \ref{TableRates}) by the maximum allowable bandwidth ($50$~GHz). As shown in Tables \ref{TableRates} and \ref{TableSE}, the proposed MTB waveforms outperform the truncated solitons in EM of isolated pulses in terms of transmission rate and spectral efficiency. Moreover, the transmission rate and spectral efficiency of MTB pulses increase as $M$ increases, which is not the case for fundamental solitons.

\begin{figure}[t!]
\centering
\begin{tikzpicture}[font=\small,
    Gclabel/.style={text=black},]
\begin{axis}[
    xlabel={$t$ (ps)},
    xmin=-150, xmax=150,
    ymin=0.0, ymax=0.22,
    ytick={0.1, 0.2},
    xtick={-150,-100,-50,0,50,100,150},
    legend style={at={(0.8,0.99)},anchor=north},
    ymajorgrids=true,
    xmajorgrids =true,
    %grid style=dashed,
   % legend cell align={left},   
    scale only axis,
    width=7cm,
    height=2.7cm,
]

\draw[violet, <->,  thick] (-140.5,0.095) -- (140.5,0.095)node[pos=0.25, above] {$T=281$\,ps};

\addplot[
    color=black,
    thick,
    dotted,
    ]
    coordinates {(-1.038876e+02,1.150890e-03)(-1.018386e+02,1.170321e-03)(-9.978952e+01,1.617695e-03)(-9.774045e+01,2.246079e-03)(-9.569139e+01,2.898327e-03)(-9.364232e+01,3.495164e-03)(-9.159325e+01,4.017796e-03)(-8.954419e+01,4.488151e-03)(-8.749512e+01,4.949595e-03)(-8.544606e+01,5.450399e-03)(-8.339699e+01,6.031500e-03)(-8.134792e+01,6.719280e-03)(-7.929886e+01,7.523316e-03)(-7.724979e+01,8.438354e-03)(-7.520072e+01,9.449331e-03)(-7.315166e+01,1.053800e-02)(-7.110259e+01,1.168968e-02)(-6.905353e+01,1.289891e-02)(-6.700446e+01,1.417304e-02)(-6.495539e+01,1.553338e-02)(-6.290633e+01,1.701385e-02)(-6.085726e+01,1.865759e-02)(-5.880820e+01,2.051232e-02)(-5.675913e+01,2.262539e-02)(-5.471006e+01,2.503936e-02)(-5.266100e+01,2.778910e-02)(-5.061193e+01,3.090073e-02)(-4.856287e+01,3.439267e-02)(-4.651380e+01,3.827860e-02)(-4.446473e+01,4.257165e-02)(-4.241567e+01,4.728905e-02)(-4.036660e+01,5.245616e-02)(-3.831754e+01,5.810895e-02)(-3.626847e+01,6.429409e-02)(-3.421940e+01,7.106606e-02)(-3.217034e+01,7.848114e-02)(-3.012127e+01,8.658868e-02)(-2.807221e+01,9.542027e-02)(-2.602314e+01,1.049780e-01)(-2.397407e+01,1.152232e-01)(-2.192501e+01,1.260671e-01)(-1.987594e+01,1.373646e-01)(-1.782687e+01,1.489129e-01)(-1.577781e+01,1.604551e-01)(-1.372874e+01,1.716886e-01)(-1.167968e+01,1.822797e-01)(-9.630611e+00,1.918817e-01)(-7.581544e+00,2.001551e-01)(-5.532478e+00,2.067902e-01)(-3.483412e+00,2.115276e-01)(-1.434346e+00,2.141764e-01)(6.147198e-01,2.146280e-01)(2.663786e+00,2.128637e-01)(4.712852e+00,2.089561e-01)(6.761918e+00,2.030642e-01)(8.810984e+00,1.954214e-01)(1.086005e+01,1.863194e-01)(1.290912e+01,1.760882e-01)(1.495818e+01,1.650742e-01)(1.700725e+01,1.536189e-01)(1.905631e+01,1.420392e-01)(2.110538e+01,1.306121e-01)(2.315445e+01,1.195632e-01)(2.520351e+01,1.090610e-01)(2.725258e+01,9.921667e-02)(2.930164e+01,9.008828e-02)(3.135071e+01,8.168901e-02)(3.339978e+01,7.399782e-02)(3.544884e+01,6.697120e-02)(3.749791e+01,6.055450e-02)(3.954698e+01,5.469161e-02)(4.159604e+01,4.933217e-02)(4.364511e+01,4.443570e-02)(4.569417e+01,3.997272e-02)(4.774324e+01,3.592310e-02)(4.979231e+01,3.227262e-02)(5.184137e+01,2.900836e-02)(5.389044e+01,2.611427e-02)(5.593950e+01,2.356760e-02)(5.798857e+01,2.133704e-02)(6.003764e+01,1.938289e-02)(6.208670e+01,1.765928e-02)(6.413577e+01,1.611806e-02)(6.618483e+01,1.471362e-02)(6.823390e+01,1.340780e-02)(7.028297e+01,1.217388e-02)(7.233203e+01,1.099884e-02)(7.438110e+01,9.883212e-03)(7.643016e+01,8.838528e-03)(7.847923e+01,7.882361e-03)(8.052830e+01,7.031905e-03)(8.257736e+01,6.297128e-03)(8.462643e+01,5.674921e-03)(8.667549e+01,5.145718e-03)(8.872456e+01,4.673913e-03)(9.077363e+01,4.213017e-03)(9.282269e+01,3.715893e-03)(9.487176e+01,3.149690e-03)(9.692083e+01,2.514242e-03)(9.896989e+01,1.861955e-03)(1.010190e+02,1.316523e-03)(1.030680e+02,1.087446e-03)
};
\addlegendentry{$E_4=1.8$~pJ}

\addplot[
    color=blue,
   % pattern=on 1pt off 2pt on 5pt off 2pt,
%    mark=square,
%    mark repeat=8,
%    mark phase=1,
%    mark options={solid},
    ]
    coordinates {
(-1.298596e+02,9.798694e-04)(-1.272982e+02,1.230492e-03)(-1.247369e+02,1.497140e-03)(-1.221756e+02,1.774403e-03)(-1.196142e+02,2.057770e-03)(-1.170529e+02,2.344338e-03)(-1.144916e+02,2.633440e-03)(-1.119302e+02,2.927098e-03)(-1.093689e+02,3.230272e-03)(-1.068076e+02,3.550855e-03)(-1.042462e+02,3.899429e-03)(-1.016849e+02,4.288762e-03)(-9.912357e+01,4.733130e-03)(-9.656224e+01,5.247488e-03)(-9.400091e+01,5.846599e-03)(-9.143957e+01,6.544189e-03)(-8.887824e+01,7.352237e-03)(-8.631691e+01,8.280472e-03)(-8.375558e+01,9.336134e-03)(-8.119424e+01,1.052407e-02)(-7.863291e+01,1.184711e-02)(-7.607158e+01,1.330683e-02)(-7.351025e+01,1.490441e-02)(-7.094891e+01,1.664180e-02)(-6.838758e+01,1.852279e-02)(-6.582625e+01,2.055407e-02)(-6.326491e+01,2.274598e-02)(-6.070358e+01,2.511299e-02)(-5.814225e+01,2.767360e-02)(-5.558092e+01,3.044983e-02)(-5.301958e+01,3.346601e-02)(-5.045825e+01,3.674710e-02)(-4.789692e+01,4.031649e-02)(-4.533559e+01,4.419344e-02)(-4.277425e+01,4.839029e-02)(-4.021292e+01,5.290964e-02)(-3.765159e+01,5.774182e-02)(-3.509026e+01,6.286265e-02)(-3.252892e+01,6.823191e-02)(-2.996759e+01,7.379258e-02)(-2.740626e+01,7.947102e-02)(-2.484493e+01,8.517816e-02)(-2.228359e+01,9.081170e-02)(-1.972226e+01,9.625928e-02)(-1.716093e+01,1.014024e-01)(-1.459960e+01,1.061212e-01)(-1.203826e+01,1.102991e-01)(-9.476931e+00,1.138282e-01)(-6.915598e+00,1.166142e-01)(-4.354265e+00,1.185803e-01)(-1.792933e+00,1.196716e-01)(7.683998e-01,1.198571e-01)(3.329732e+00,1.191315e-01)(5.891065e+00,1.175154e-01)(8.452398e+00,1.150544e-01)(1.101373e+01,1.118167e-01)(1.357506e+01,1.078900e-01)(1.613640e+01,1.033773e-01)(1.869773e+01,9.839221e-02)(2.125906e+01,9.305382e-02)(2.382039e+01,8.748183e-02)(2.638173e+01,8.179160e-02)(2.894306e+01,7.608995e-02)(3.150439e+01,7.047164e-02)(3.406572e+01,6.501680e-02)(3.662706e+01,5.978925e-02)(3.918839e+01,5.483601e-02)(4.174972e+01,5.018762e-02)(4.431105e+01,4.585943e-02)(4.687239e+01,4.185353e-02)(4.943372e+01,3.816118e-02)(5.199505e+01,3.476562e-02)(5.455638e+01,3.164480e-02)(5.711772e+01,2.877410e-02)(5.967905e+01,2.612871e-02)(6.224038e+01,2.368550e-02)(6.480171e+01,2.142444e-02)(6.736305e+01,1.932938e-02)(6.992438e+01,1.738833e-02)(7.248571e+01,1.559320e-02)(7.504704e+01,1.393914e-02)(7.760838e+01,1.242358e-02)(8.016971e+01,1.104515e-02)(8.273104e+01,9.802580e-03)(8.529238e+01,8.693641e-03)(8.785371e+01,7.714371e-03)(9.041504e+01,6.858532e-03)(9.297637e+01,6.117394e-03)(9.553771e+01,5.479847e-03)(9.809904e+01,4.932798e-03)(1.006604e+02,4.461809e-03)(1.032217e+02,4.051898e-03)(1.057830e+02,3.688408e-03)(1.083444e+02,3.357867e-03)(1.109057e+02,3.048733e-03)(1.134670e+02,2.751976e-03)(1.160284e+02,2.461430e-03)(1.185897e+02,2.173896e-03)(1.211510e+02,1.888993e-03)(1.237124e+02,1.608781e-03)(1.262737e+02,1.337203e-03)(1.288350e+02,1.079391e-03)
};
\addlegendentry{$E_3=0.8$~pJ}

\addplot[
    color=red,
    dashed,
    %dash pattern=on 3pt off 1.5pt,
    %dash pattern=on 1pt off 2pt on 5pt off 2pt,
%    mark=square,
%    mark repeat=8,
%    mark phase=1,
%    mark options={solid},
    ]
    coordinates {
(-1.366943e+02,3.541698e-04)(-1.339981e+02,4.968488e-04)(-1.313020e+02,6.518936e-04)(-1.286059e+02,8.163729e-04)(-1.259097e+02,9.877180e-04)(-1.232136e+02,1.164074e-03)(-1.205174e+02,1.344613e-03)(-1.178213e+02,1.529774e-03)(-1.151252e+02,1.721422e-03)(-1.124290e+02,1.922886e-03)(-1.097329e+02,2.138905e-03)(-1.070367e+02,2.375449e-03)(-1.043406e+02,2.639458e-03)(-1.016445e+02,2.938510e-03)(-9.894832e+01,3.280443e-03)(-9.625218e+01,3.672973e-03)(-9.355604e+01,4.123339e-03)(-9.085990e+01,4.638007e-03)(-8.816376e+01,5.222463e-03)(-8.546762e+01,5.881101e-03)(-8.277148e+01,6.617242e-03)(-8.007535e+01,7.433240e-03)(-7.737921e+01,8.330706e-03)(-7.468307e+01,9.310778e-03)(-7.198693e+01,1.037444e-02)(-6.929079e+01,1.152283e-02)(-6.659465e+01,1.275747e-02)(-6.389851e+01,1.408048e-02)(-6.120237e+01,1.549455e-02)(-5.850623e+01,1.700284e-02)(-5.581009e+01,1.860870e-02)(-5.311395e+01,2.031523e-02)(-5.041781e+01,2.212463e-02)(-4.772167e+01,2.403753e-02)(-4.502553e+01,2.605217e-02)(-4.232939e+01,2.816366e-02)(-3.963325e+01,3.036323e-02)(-3.693711e+01,3.263761e-02)(-3.424097e+01,3.496864e-02)(-3.154483e+01,3.733310e-02)(-2.884869e+01,3.970279e-02)(-2.615255e+01,4.204501e-02)(-2.345641e+01,4.432325e-02)(-2.076027e+01,4.649822e-02)(-1.806414e+01,4.852913e-02)(-1.536800e+01,5.037513e-02)(-1.267186e+01,5.199693e-02)(-9.975716e+00,5.335835e-02)(-7.279577e+00,5.442788e-02)(-4.583437e+00,5.518007e-02)(-1.887298e+00,5.559666e-02)(8.088419e-01,5.566740e-02)(3.504981e+00,5.539055e-02)(6.201121e+00,5.477293e-02)(8.897261e+00,5.382964e-02)(1.159340e+01,5.258331e-02)(1.428954e+01,5.106316e-02)(1.698568e+01,4.930364e-02)(1.968182e+01,4.734298e-02)(2.237796e+01,4.522157e-02)(2.507410e+01,4.298043e-02)(2.777024e+01,4.065962e-02)(3.046638e+01,3.829690e-02)(3.316252e+01,3.592666e-02)(3.585866e+01,3.357896e-02)(3.855480e+01,3.127909e-02)(4.125094e+01,2.904726e-02)(4.394707e+01,2.689868e-02)(4.664321e+01,2.484390e-02)(4.933935e+01,2.288930e-02)(5.203549e+01,2.103782e-02)(5.473163e+01,1.928969e-02)(5.742777e+01,1.764327e-02)(6.012391e+01,1.609574e-02)(6.282005e+01,1.464379e-02)(6.551619e+01,1.328415e-02)(6.821233e+01,1.201389e-02)(7.090847e+01,1.083068e-02)(7.360461e+01,9.732752e-03)(7.630075e+01,8.718841e-03)(7.899689e+01,7.787939e-03)(8.169303e+01,6.939005e-03)(8.438917e+01,6.170658e-03)(8.708531e+01,5.480877e-03)(8.978145e+01,4.866752e-03)(9.247759e+01,4.324334e-03)(9.517373e+01,3.848573e-03)(9.786987e+01,3.433382e-03)(1.005660e+02,3.071806e-03)(1.032621e+02,2.756280e-03)(1.059583e+02,2.478972e-03)(1.086544e+02,2.232152e-03)(1.113506e+02,2.008577e-03)(1.140467e+02,1.801845e-03)(1.167428e+02,1.606694e-03)(1.194390e+02,1.419205e-03)(1.221351e+02,1.236919e-03)(1.248313e+02,1.058831e-03)(1.275274e+02,8.852784e-04)(1.302235e+02,7.177328e-04)(1.329197e+02,5.585140e-04)(1.356158e+02,4.104498e-04)

};
\addlegendentry{$E_2=0.2$~pJ}

\end{axis}    
\end{tikzpicture}
    \caption{Transmitted $4$-EM MTB pulses over $80$~Km lossless fiber.}
    \label{FigMTB4EM}
\end{figure}
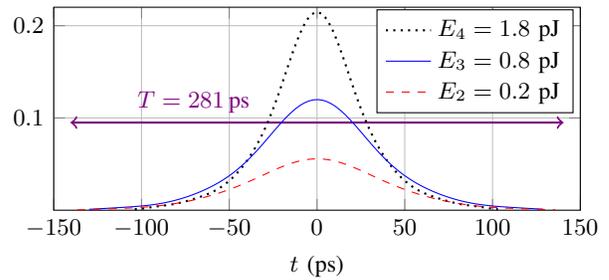

\begin{figure}[t!]
\centering
\input{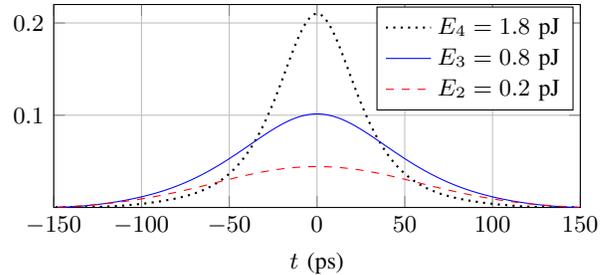}
    \caption{Magnitude of received $4$-EM MTB pulses over $80$~Km lossless fiber.}
    \label{FigRxMTB4EM}
\end{figure}

\section{Conclusions}
\label{Sect:conc}

We introduced waveforms that undergo the minimum time broadening along optical fibers. These waveforms maximize the transmission rates of communication systems that use energy modulation of isolated pulses, and they open the door to finding the maximum achievable rate of such systems. We regard the problem of finding the maximum information theoretic rates as a future research, since solving it requires finding the optimal energy levels and their probabilities based on a given noise model. Another future research is to study the use of these waveforms in intensity modulation direct detection over dispersion-only channels:  pulses with minimum time broadening do not require dispersion compensation and they generate a low number of interfering symbols. We therefore believe that these waveforms could play an important role in complexity-constrained systems. Lastly, the effects of amplifier noise on the designed system are also to be explored. 

% use section* for acknowledgment
%\section*{Acknowledgments}

%This work has  received  funding  from  the  European  Research  Council (ERC) under the European Union’s Horizon 2020 research and innovation programme (grant agreement No 757791).

% Can use something like this to put references on a page
% by themselves when using endfloat and the captionsoff option.
\ifCLASSOPTIONcaptionsoff
  \newpage
\fi

\bibliographystyle{ieeetr}
\bibliography{references}

\end{document}